\documentclass[a4paper,11pt,reqno]{amsart}
\usepackage[margin=3cm]{geometry}
\usepackage[english]{babel}
\usepackage[T1]{fontenc}
\usepackage{latexsym,amsmath,amsfonts,amssymb,amsthm,hyperref,doi,tikz,float,arydshln,float}
\usetikzlibrary{arrows}
\newcommand{\arXiv}[2]{arXiv:\href{http://arxiv.org/abs/#1}{#1 #2}}

\hypersetup{%
pdfauthor={Tam\'{a}s~G\"{o}rbe and \'{A}d\'{a}m~Gyenge},%
pdftitle={Canonical spectral coordinates for the Calogero-Moser space associated with the cyclic quiver},%
pdfsubject={Preprint (\today)},%
pdfkeywords={Calogero-Moser, cyclic quiver, Darboux coordinates, canonical spectral coordinates, Sklyanin's formula}}

\theoremstyle{definition}

\newtheorem*{remark}{Remark}
\theoremstyle{plain}
\newtheorem{theorem}{Theorem}
\newtheorem{lemma}[theorem]{Lemma}
\newtheorem{proposition}[theorem]{Proposition}
\newtheorem{corollary}[theorem]{Corollary}

\def\C{\mathbb{C}}
\def\P{\mathbb{P}}
\def\R{\mathbb{R}}
\def\Z{\mathbb{Z}}
\def\1{\mathbf{1}}
\def\cC{\mathcal{C}}

\def\adj{\mathrm{adj}}
\def\diag{\mathrm{diag}}
\def\End{\mathrm{End}}
\def\GL{\mathrm{GL}}
\def\SL{\mathrm{SL}}
\def\tr{\mathrm{tr}}
\def\ri{\mathrm{i}}
\newcommand{\vect}[1]{\boldsymbol{#1}}
\def\blambda{\vect{\lambda}}

\begin{document}

\title[Sklyanin's formula for the CM space attached to the cyclic quiver]{Canonical spectral coordinates for the Calogero-Moser space associated with the cyclic quiver}

\author{Tam\'{a}s G\"{o}rbe}
\address{Tam\'{a}s G\"{o}rbe, School of Mathematics, University of Leeds, Leeds, LS2 9JT, UK}
\email{T.Gorbe@leeds.ac.uk}

\author{\'{A}d\'{a}m~Gyenge}
\address{\'{A}d\'{a}m~Gyenge, Department of Mathematics, The University of British Columbia, 1984 Mathematics Road, V6T 1Z2, Vancouver, BC Canada}
\email{agyenge@math.ubc.ca}

\maketitle

\begin{abstract}
Sklyanin's formula provides a set of canonical spectral coordinates on the standard Calogero-Moser space associated with the quiver consisting of a vertex and a loop. We generalize this result to Calogero-Moser spaces attached to cyclic quivers by constructing rational functions that relate spectral coordinates to conjugate variables. These canonical coordinates turn out to be well-defined on the corresponding simple singularity of type $A$, and the rational functions we construct define interpolating polynomials between them.
\end{abstract}

\section{Introduction}
\label{sec:intro}

The $n$-th Calogero-Moser space $\cC_n$ can be viewed as the completed phase space of the $n$-particle rational Calogero-Moser (CM) system \cite{Ca71,Mo75,Wi98}. This system describes $n$ interacting particles with positions $q=(q_1,\dots,q_n)$ and momenta $p=(p_1,\dots,p_n)$ evolving in time according to Hamilton's equations
\begin{equation}
\frac{dq_j}{dt}=\frac{\partial H}{\partial p_j},\quad \frac{dp_j}{dt}=-\frac{\partial H}{\partial q_j},\quad j=1,\dots,n
\label{eq:EoM}
\end{equation}
given by the Hamiltonian
\begin{equation}
H(p,q)=\sum_{j=1}^n\frac{p_j^2}{2}+\sum_{1\leq j<k\leq n}\frac{\gamma}{(q_j-q_k)^2}.
\label{eq:H}
\end{equation}
Here $\gamma$ is a parameter that controls the strength of particle interaction, which itself is defined via a pair-potential that is inversely proportional to the square of the difference of particle-positions. This system has many conserved quantities, i.e. functions $F$ such that $\{H,F\}=0$, that can be obtained as spectral invariants of a matrix-valued function of $q$ and $p$, that is the Lax matrix of the system. Moreover, the eigenvalues of the Lax matrix of the CM system form a complete set of Poisson commuting first integrals, hence the CM system is Liouville integrable \cite{Mo75}. This encourages the investigation of the spectrum of the Lax matrix. These eigenvalues provide partial parametrisation of the CM space on the dense open subset where the Lax matrix is diagonalisable. A natural question is to find a set of conjugate variables in order to obtain a full parametrisation compatible with the symplectic structure $\sum_{j=1}^ndp_j\wedge dq_j$. Sklyanin formulated a conjectural expression for conjugate variables in \cite{Sk09}. Utilizing the bi-Hamiltonian structure of the classical CM system, this conjecture was proved in \cite{FM17}. Another proof of Sklyanin's formula was given in \cite{Go16} using Hamiltonian reduction \cite{KKS78}.

Canonical spectral coordinates are central to the algebro-geometric approach to integrable systems \cite{Sk95}. In general, when the Lax matrix of a system depends on a spectral parameter $z$, canonical coordinates are given by the location of the poles of a (suitably normalized) eigenvector of the Lax matrix $L(z)$. Equivalently, the coordinates are given by the locations on the spectral curve $\det(\lambda\1-L(z))$ of the points corresponding to the zeros of a specific polynomial. However, this method cannot be applied directly to the rational CM-system, because all poles of the eigenvector are located above $z=\infty$, hence the $z$ coordinates of these poles do not provide conjugate variables to the eigenvalues of the specialized (spectral parameter independent) Lax matrix $L(\infty)$. A formula conjectured by Sklyanin \cite{Sk09} resolves exactly this problem. Instead of the coordinate $z$, some other function associated with the dynamical variables should be used to express the conjugate variables.

The classical CM space $\cC_n$ is also a particular example of a quiver variety \cite{Na94,Gi01}. Namely, it is associated with the quiver consisting of only one vertex with a loop attached to it. More general Calogero-Moser spaces associated with other quivers can be constructed in a similar manner. In this work, we investigate the CM space associated with the cyclic quiver as introduced in \cite{CS17}. For short, we will call it the \emph{equivariant Calogero-Moser space} since it can be thought of as the completed phase space of the equivariant $n$-particle rational Calogero-Moser system under the action of the cyclic group of order $m$. We will denote it by $\cC_{n}^m$. Our main observation is that, similarly to the non-equivariant case $\cC_n=\cC_n^1$, an explicit formula for the conjugate variables on $\cC_{n}^m$ can be given.

One can go even further by allowing the particles to have spin, i.e. internal degrees of freedom. The corresponding space will be denoted by $s \cC_n^m$, where we suppressed the dimension of the space of internal states. We extend our results to this case as well. Although the resulting formulas look the same as in the spinless case, the proofs differ at several points.

It was shown in \cite{CS17} that on the dense open subset where the specialized Lax matrix is diagonalisable, if $(\lambda_1,\dots,\lambda_n)$ are its the eigenvalues, then there is a certain set of variables $(\phi_1,\dots,\phi_n)$ which are conjugate to the eigenvalues. Our first main result is the following
\begin{theorem}
\label{thm:rel}
For each point of the Calogero-Moser space $\cC_{n}^m$ (resp., $s \cC_{n}^m$) there is a certain rational function $r(z)\in \C(z)$ such that on a dense open subset the relationship 
\begin{equation}
\phi_i=r(\lambda_i)\quad i=1,\dots,n
\end{equation}
between the sets of conjugate variables $(\phi_1,\dots,\phi_n)$ and $(\lambda_1,\dots,\lambda_n)$ holds.
\end{theorem}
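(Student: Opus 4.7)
The plan is to construct $r(z)$ explicitly as a rational function whose coefficients are built from the quiver data at a chosen point of $\cC_n^m$ (resp.\ $s\cC_n^m$), and then to verify the interpolation property $r(\lambda_i)=\phi_i$ by a direct adjugate computation. Let $L$ denote the specialized Lax matrix at the point, and set $B(z) = \det(z\1 - L) = \prod_{i=1}^n(z-\lambda_i)$. Work on the dense open subset where the eigenvalues are distinct, and pick right and left eigenvectors $u_i$, $v_i$ of $L$ with eigenvalue $\lambda_i$, normalized so that $v_i^\top u_i = 1$. The key algebraic fact is the rank-one specialization of the adjugate,
\[
\adj(\lambda_i \1 - L) \;=\; B'(\lambda_i)\, u_i v_i^\top,
\]
which implies $\tr\bigl(M\cdot\adj(\lambda_i\1 - L)\bigr) = B'(\lambda_i)\cdot v_i^\top M u_i$ for every matrix $M$.

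The first step is to rewrite the formula of \cite{CS17} for $\phi_i$ as a bilinear form $\phi_i = v_i^\top M u_i$, for a single matrix $M$ built from the remaining arrows of the cyclic quiver (and, in the spin case, the framing vectors as well), independent of $i$. Once $M$ is identified, define
\[
r(z) \;=\; \frac{\tr\bigl(M\cdot\adj(z\1 - L)\bigr)}{B'(z)} \in \C(z).
\]
The numerator has degree at most $n-1$, so $r(z)$ is finite at each $\lambda_i$, and the identity above immediately yields $r(\lambda_i) = v_i^\top M u_i = \phi_i$.

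The main obstacle is the first step: extracting the bilinear-form presentation $\phi_i = v_i^\top M u_i$ from the description of $(\phi_1,\dots,\phi_n)$ in \cite{CS17} and checking that the matrix $M$ is invariant under the gauge group used in the Hamiltonian reduction defining $\cC_n^m$. In the spinless case this should amount to massaging the moment-map relations for the cyclic quiver until the expression for $\phi_i$ collapses into a single trace. For the spin case $s\cC_n^m$ the moment-map relations carry rank-one framing corrections, so $M$ acquires extra summands and the verification is more delicate; nevertheless the overall strategy and the shape of $r(z)$ should remain the same, in line with the remark in the introduction that the spinless and spin formulas look alike.
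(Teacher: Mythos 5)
Your overall ansatz is the right one --- the paper's function is exactly $r(z)=D(z)/A'(z)$ with $D(z)=\tr\bigl(X\,\adj(z\1_V-P)\bigr)$ and $A(z)=\det(z\1_V-P)$, and the rank-one specialization $\adj(\lambda_i\1-P)=A'(\lambda_i)\,u_iv_i^{\top}$ at a simple eigenvalue is a legitimate alternative to the paper's explicit computation of the block inverse $(z\1_{mn}-\tilde L)^{-1}$ (Lemma \ref{lemma:2}). But as written the proposal has two genuine gaps. First, for $m>1$ the relevant Lax-type operator is the $mn\times mn$ map $P$ on $V=\bigoplus_i V_i$, whose characteristic polynomial is $A(z)=\prod_{j}(z^m-\lambda_j^m)$ (Lemma \ref{lemma:1}), not $\prod_i(z-\lambda_i)$: each $\lambda_j$ contributes the full orbit $\{\omega^k\lambda_j\}$ of eigenvalues, the numerator has degree $mn-2$ rather than $n-1$, and the normalizing factor at $z=\lambda_k$ is $A'(\lambda_k)=m\lambda_k^{m-1}\prod_{\ell\neq k}(\lambda_k^m-\lambda_\ell^m)$. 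Your setup with an $n\times n$ matrix $L$ only describes the case $m=1$; the eigenvectors $u_i,v_i$ you need are supported across all $m$ blocks of $V$, so the bilinear form $v_i^{\top}Xu_i$ mixes all $m$ blocks $\tilde Q_0,\dots,\tilde Q_{m-1}$ of $X$.

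Second, and more seriously, the identity $\phi_i=v_i^{\top}Xu_i$ --- which you correctly flag as ``the main obstacle'' --- is precisely the content of the theorem, and you defer it rather than prove it. In the paper this step requires the explicit Chalykh--Silantyev form of the blocks $\tilde Q_i$, whose diagonal entries are $\phi_j-c_i\lambda_j^{-1}$ (resp.\ carry the framing corrections $[\tilde v_r\tilde w_r]_{j,j}$ in the spin case), together with the nontrivial cancellations $c_0+\dots+c_{m-1}=0$ and \eqref{eq:sumcspin}. Without these, $r(\lambda_i)$ would equal $\phi_i$ plus a nonzero multiple of $\lambda_i^{-1}$, so the claim is false for a generic choice of the matrix $M$ in your bilinear form; the proof cannot be completed by ``massaging the moment-map relations'' in the abstract. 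Note also that in the spin case the matrix $M$ does \emph{not} acquire extra summands --- $D(z)=\tr(X\,\adj(z\1_V-P))$ is literally the same expression --- it is the internal structure of $\tilde Q$ and the constraint \eqref{spin-constraint} that change, which is where the spin proof genuinely differs.
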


Theorem \ref{thm:rel} shows that although the specialized Lax matrix of the equivariant CM system does not contain a spectral parameter, the conjugate variable pairs $(\lambda_i,\phi_i)$ are still lying on an ``interpolation curve'' defined by the equation $y=r(z)$. More precisely, the pair $(\lambda_i,\phi_i)$ is a well-defined point on the singular surface of type $A_{m-1}$, and the interpolation curve between the points $\{(\lambda_i,\phi_i)\}$ is a rational curve on this singular surface.

The datum which represents a point on $\cC_{n}^m$ or on $s \cC_{n}^m$ contains a \emph{framing}, which consists of two additional vectors $v$ and $w$. In the spin case the variables $(\lambda_1,\dots,\lambda_n)$ and $(\phi_1,\dots,\phi_n)$ together with the coordinates of the vectors $v$ and $w$ form a complete set of canonical coordinates, whereas in the spinless case the coordinates of $v$ and $w$ can always be gauged away and $(\lambda_1,\dots,\lambda_n,\phi_1,\dots,\phi_n)$ are enough for a complete local parametrisation.

It turns out that on $\cC_n$ (resp., $s \cC_{n}^m$) the set $(\phi_1,\dots,\phi_n)$ is not the only natural set of variables which is conjugate to $(\lambda_1,\dots,\lambda_n)$ \cite{Go16}.  The function $r(z)$ appearing in Theorem \ref{thm:rel} (including its special case for $m=1$) does not depend on the framing part of the datum whereas the conjectured formula in \cite{Sk09}, which gives another set of conjugate variables on $\cC_n$ in the $m=1$ case, does depend on the framing. Our second result is that an analogue of Sklyanin's formula from \cite{Sk09} is also valid in the equivariant case, and there is a second natural set of conjugate variables to $(\lambda_1,\dots,\lambda_n)$ which depends on the framing as well.
\begin{theorem}
\label{thm:mod}
For each point of the Calogero-Moser space $\cC_{n}^m$ (resp., $s \cC_{n}^m$) there is a certain rational function $s(z) \in \C(z)$, depending also on the framing part of the datum, such that on a dense open subset the variables $\theta_1,\dots,\theta_n$, defined as
\begin{equation}
\theta_i=s(\lambda_i),\quad i=1,\dots,n
\end{equation}
are conjugate to $\lambda_1,\dots,\lambda_n$, respectively.
\end{theorem}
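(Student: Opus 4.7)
The plan is to construct $s(z)$ by adapting Sklyanin's original ansatz to the cyclic quiver setting and then verify the canonicality of $\theta_i=s(\lambda_i)$ by a Poisson bracket computation performed on the unreduced space followed by restriction to the moment map level set. Recall that a point of $\cC_n^m$ (resp.\ $s\cC_n^m$) is represented by a quadruple $(X,L,v,w)$ (with appropriate grading by the cyclic group), where $L$ is the specialized Lax matrix and $v,w$ are the framing vectors. By analogy with the $m=1$ formula of \cite{Sk09,Go16}, the natural candidate is a rational function of the form $s(z)=w^\top M(z)v$, where $M(z)$ is built out of $(z\1-L)^{-1}$ and of the ``position'' matrix $X$ in a way compatible with the $\Z/m\Z$-grading; the precise expression should be read off from the equivariant generalisation of the identity $X=\sum_i \frac{v_iw_i^\top}{z-\lambda_i}$ that governs the $m=1$ case.

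The first step is to write down $s(z)$ explicitly and show that $\theta_i=s(\lambda_i)$ is well-defined on the dense open subset where $L$ has simple spectrum. The second step is to verify
\begin{equation}
\{\lambda_i,\theta_j\}=\delta_{ij},\qquad \{\theta_i,\theta_j\}=0
\end{equation}
on this open locus. For the first bracket I would compute $\{\lambda_i,w^\top M(z) v\}$ on the big space using the standard brackets $\{X_{ab},L_{cd}\}$, $\{v_a,w_b\}$, then evaluate at $z=\lambda_i$, eliminating the non-gauge-invariant pieces using the moment map relation $[X,L]+vw^\top=\text{(characteristic cocycle)}$ appropriate for $\cC_n^m$. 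For the second bracket I would argue by a Lax-type identity: $\{\theta_i,\theta_j\}$ should be expressible as a symmetric sum that vanishes when evaluated at two distinct eigenvalues, so antisymmetry forces it to be zero. Finally, independence of $\theta_i$ from the choice of eigenvector normalisation follows because $s(z)$ itself is built only from gauge-invariant matrix entries.

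The main obstacle will be the second bracket $\{\theta_i,\theta_j\}=0$, since the cyclic structure forces $M(z)$ to contain factors of $X$, and $X$ does \emph{not} Poisson-commute with itself in the quiver setting — one must use the moment map relation carefully to trade the offending terms for vanishing contributions. A secondary difficulty is the spin case $s\cC_n^m$: there the framing vectors are no longer gauge-fixable, so the argument must be carried out genuinely on the extended symplectic manifold and the Poisson brackets involving $v,w$ cannot be set to zero as in the spinless reduction. I expect that, after the spinless case is handled, the spin case follows by replacing residue calculations with their vector-valued analogues, but the intermediate identities must be rechecked since the gauge-fixing shortcut no longer applies. Once canonicality is established pointwise, the rational-function statement is automatic from the construction of $s(z)$.
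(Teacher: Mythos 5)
Your candidate function is the right one in spirit: the paper takes $s(z)=C(z)/(|g|A'(z))$ with $C(z)=wX\,\adj(z\1_V-P)v$ and $A(z)=\det(z\1_V-P)$, which is exactly the gauge-invariant expression of the form $w\,M(z)\,v$ you are gesturing at. But your proposal stops at the point where the actual work begins, and the plan you sketch for the hardest step would not go through as stated. The paper does \emph{not} compute brackets on the unreduced space and then restrict to the moment-map level set; it works entirely in the Chalykh--Silantyev dual parametrisation $(\tilde Q,\tilde L,\tilde v,\tilde w)$, in which $(\lambda_j,\phi_j/m,[\tilde w_i]_{\alpha,j},[\tilde v_i]_{j,\alpha})$ are \emph{already known} to be canonical. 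Evaluating $C(\lambda_k)/(|g|A'(\lambda_k))$ in these coordinates gives the explicit decomposition $\theta_k=\phi_k/m+f_k(\lambda)$ (spinless) or $\theta_k=\phi_k/m+e_k+f_k$ (spin), after which $\{\lambda_j,\theta_k\}=\delta_{jk}$ is immediate and $\{\theta_j,\theta_k\}=0$ reduces to concrete identities among the correction terms. Your alternative route through unreduced brackets is the strategy of the $m=1$ proof in \cite{Go16}, but you have not shown that the required trade-offs via the moment map relation survive the $\Z_m$-grading, and you never actually write down $s(z)$.

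The genuine gap is your treatment of $\{\theta_i,\theta_j\}=0$. You assert it ``should be expressible as a symmetric sum that vanishes \dots so antisymmetry forces it to be zero.'' In the spinless case what actually happens is that $f_k$ depends nontrivially on all the $\lambda_\ell$ and one must verify the closedness condition $\partial f_j/\partial\lambda_k=\partial f_k/\partial\lambda_j$ by explicit differentiation of $\lambda_k^m/(\lambda_k^m-\lambda_\ell^m)$-type terms; this is a real computation, not an antisymmetry formality. In the spin case the situation is far worse: the brackets $\{e_j,f_k\}+\{f_j,e_k\}$ and $\{f_j,f_k\}$ are each \emph{nonzero}, and they cancel against one another only after a delicate reindexing ($h'\equiv m-h-2$, $i'\equiv i-h-1$ modulo $m$) that exploits the cyclic structure, together with the constraint $\sum_i[\tilde v_i\tilde w_i]_{j,j}=|g|$. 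No symmetry-of-a-single-sum argument produces this; the cancellation is between structurally different triple sums over spin indices. You correctly identify this as the main obstacle and correctly note that the spin case cannot be gauge-fixed away, but identifying the obstacle is not the same as overcoming it: as written, the proposal has no argument for the involutivity of the $\theta_k$, which is the substance of the theorem.
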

It is known that the non-equivariant CM space $\cC_n$ is a deformation of the Hilbert scheme $\mathrm{Hilb}^n(\C^2)$ of $n$ points on $\C^2$. The framing vectors play an essential role in the stability condition of the GIT construction of $\mathrm{Hilb}^n(\C^2)$ as a quiver variety \cite{Na99}. Hence, it seems useful to keep track of the framing vectors (or their steadiness) during a degeneration of $\cC_n$ into $\mathrm{Hilb}^n(\C^2)$. The advantage of Theorem \ref{thm:mod} is that as opposed to $r(z)$ the functions $s(z)$ can measure such a steadiness.

Theorems \ref{thm:rel} and \ref{thm:mod} show that there are at least two natural sets of variables conjugate to the spectral variables $(\lambda_1,\dots,\lambda_n)$. Correspondingly, there are two natural interpolation curves on the singular surface of type $A_{m-1}$.

The structure of the paper is as follows. In Section \ref{sec:SoV} we recall the recipe of separation of variables and its relation to the spectral curve with a special emphasis on the rational CM system. In Section \ref{sec:quiver} we summarize the construction and the symplectic structure on the CM space associated with the cyclic quiver. In Section \ref{sec:CM-spaces} we prove Theorems \ref{thm:rel} and \ref{thm:mod} for the spinless case. In Section \ref{sec:spin-CM-spaces}, after introducing the equivariant CM space with spin, we give the proofs of Theorems \ref{thm:rel} and \ref{thm:mod} for this case. In Section \ref{sec:interpolation} we construct the interpolation curves on the singular surface of type $A_{m-1}$.

\subsection*{Acknowledgements.}
The authors would like to thank Jim Bryan and Bal\'{a}zs Szendr\H{o}i for helpful comments and discussions.

\bigskip\noindent
\parbox{.15\textwidth}{\begin{tikzpicture}[scale=.035]
\fill[fill={rgb,255:red,0;green,51;blue,153}] (-27,-18) rectangle (27,18);  
\pgfmathsetmacro\inr{tan(36)/cos(18)}
\foreach \i in {0,1,...,11} {
\begin{scope}[shift={(30*\i:12)}]
\fill[fill={rgb,255:red,255;green,204;blue,0}] (90:2)
\foreach \x in {0,1,...,4} { -- (90+72*\x:2) -- (126+72*\x:\inr) };
\end{scope}}
\end{tikzpicture}} \parbox{.85\textwidth}{This project has received funding from the European Union's Horizon 2020 research and innovation programme under the Marie Sk{\l}odowska-Curie grant agreement No 795471.}

\section{Separation of variables and the spectral curve of the rational Calogero-Moser system}
\label{sec:SoV}

We briefly review the method of separation of variables (SoV) following \cite{Sk95}. Consider a Liouville integrable system having $n$ degrees of freedom. This means a $2n$-dimensional symplectic manifold $(\mathcal{P},\omega)$ with $n$ independent smooth functions $H_1,\dots,H_n$ on it that commute with respect to the Poisson bracket $\{,\}$ induced by the symplectic form $\omega$, i.e. $\{H_j,H_k\}=\omega(X_{H_j},X_{H_k})=0$, $j,k=1,\dots,n$. A system of canonical coordinates $(p_j,q_j)$, $j=1,\dots,n$, i.e. local coordinates on the symplectic manifold satisfying
\begin{equation}
\{p_j,p_k\}=\{q_j,q_k\}=0, \quad \{p_j,q_k\}=\delta_{j,k}\quad j,k=1,\dots,n
\end{equation}
is called \emph{separated} if there exist $n$ relations of the form
\begin{equation}
\label{eq:sepeq}
\Phi_j(q_j,p_j,H_1,\dots,H_n)=0, \quad j=1,\dots,n.
\end{equation}
Such a system of variables induces an explicit decomposition of the Liouville tori into one-dimensional tori and makes several calculations about the system straightforward \cite{Sk95}.

Suppose that the system under consideration has a Lax representation. This means that the equations of motion \eqref{eq:EoM} can be written in the form
\begin{equation}
\dot{L}(z)=[L(z),M(z)]
\end{equation}
with some matrices $L(z)$ and $M(z)$ of size $n \times n$, whose elements are functions on the phase space and which depend on an additional parameter $z$ called spectral parameter. Then the functions $H_1,\dots,H_n$ can be expressed in terms of the coefficients $t_1(z),\dots,t_n(z)$ of the characteristic polynomial $W(\Lambda, z)$ of the matrix $L(z)$
\begin{equation}
W(\Lambda,z)=\det(\Lambda\1 - L(z))=\sum_{k=0}^n(-1)^kt_k(z)\Lambda^{n-k}.
\end{equation}
The characteristic equation
\begin{equation}
W(\Lambda,z)=0
\label{eq:chareq}
\end{equation}
defines the eigenvalue $\Lambda(z)$ of $L(z)$ as a function on the corresponding $n$-sheeted Riemannian surface of the parameter $z$. The Baker-Akhiezer function $\Omega(z)$ is defined as the eigenvector of $L(z)$ corresponding to the eigenvalue $\Lambda(z)$, i.e. we have
\begin{equation}
L(z)\Omega(z)=\Lambda(z)\Omega(z).
\label{eq:eigvect}
\end{equation}
After a suitable normalization, $\Omega(z)$ becomes a meromorphic function on the Riemannian surface \eqref{eq:chareq}. Sklyanin's formula hints that the coordinates $z_j$ of these poles play an important role. The formula is based on the observation that for many models the variables $z_j$ Poisson commute and, together with the corresponding eigenvalues $\Lambda_j = \Lambda(z_j )$ of $L(z_j)$, or some functions $p_j$ of $z_j$, provide a set of separated canonical variables for the Hamiltonians $H_1,\dots,H_n$.  One reason for this is that since $\Lambda_j=\Lambda(z_j )$ is an eigenvalue of $L(z_j)$, the pair $(\Lambda_j,z_j)$ lies on the spectral curve \eqref{eq:chareq}, i.e.
\begin{equation}
\label{eq:wljzj}
W(\Lambda_j,z_j)=0.
\end{equation}
If, furthermore, $z_j$ is a function of $p_j$, then \eqref{eq:wljzj} provides the equations \eqref{eq:sepeq} as well.

The (complexified) rational Calogero-Moser system is a completely integrable Hamiltonian system describing a collection of $n$ identical particles on the affine line $\C$. The phase space of the Calogero-Moser system is $T^{\ast}(\C^n \setminus\{\textrm{all diagonals}\})$, the configurations are $n$ distinct unlabelled points $q_j\in \C$ with momenta $p_j\in \C$. The Lax matrix of the system can be brought to the form \cite{Ca75,KBBT95} \cite[(53)]{Ta02}
\begin{equation}
L(z)=L+\ri gz^{-1}\vect{e}\vect{e}^\top,
\label{eq:spectrallax}
\end{equation}
where $\vect{e}\in\R^n$ is the vector given by
\begin{equation}
\vect{e}=(1,\dots,1)^\top,
\end{equation}
and the components of the $z$-independent matrix $L$ are
\begin{equation}
L_{j,k}=p_j\delta_{j,k}+\ri g(q_j-q_k)^{-1}(1-\delta_{j,k}),\quad j,k=1,\dots,n.
\label{}
\end{equation}
As it was observed in \cite[Section 5.2]{Ta02}, the matrix determinant lemma
\begin{equation}
\det(M+\vect{x}\vect{y}^\top)=\det(M)+\vect{y}^\top\adj(M)\vect{x}
\end{equation}
implies that the characteristic polynomial of $L(z)$ simplifies to
\begin{equation}
\det(\Lambda\1-L(z))=P_0(\Lambda)-\ri gz^{-1}P_1(\Lambda),
\end{equation}
where
\begin{equation}
P_0(\Lambda)=\det(\Lambda\1-L)
\end{equation}
is the characteristic polynomial of $L$, and
\begin{equation}
P_1(\Lambda)=\vect{e}^\top\adj(\Lambda\1-L)\vect{e}=\tr(\adj(\Lambda\1-L)\vect{e}\vect{e}^\top).
\end{equation}
(We note that $\adj(M)$ denotes the adjugate matrix of $M$.)
In particular, the characteristic equation of the spectral curve of the system takes the form of a graph of a rational function
\begin{equation}
z=\ri g\frac{P_1(\Lambda)}{P_0(\Lambda)}.
\end{equation}

It follows that if $|z|<\infty$, then the eigenvector equation \eqref{eq:eigvect} can always be solved, and the solution has a finite magnitude. This means that all poles of the Baker-Akhiezer function $\Omega(z)$ are at $z=\infty$. The eigenvalues of $L(\infty)$ are exactly the eigenvalues of the matrix $L$ due to \eqref{eq:spectrallax}. Let us denote these by $\lambda_1,\dots,\lambda_n$. They form one half of a set of conjugate variables. Since each $\lambda_j$ lies on the level set $z= \infty$, the function $z$ cannot be a conjugate variable to them on the moduli space of all solutions of the system.

A similar situation occurs for the open Toda chain, which was resolved in \cite[2.20b]{Sk13}. In that case one looks for another rational expression which provides the sought-after conjugate variables. For the classical CM system such an expression for conjugate variables was conjectured in \cite{Sk09}. The formula turns out to be again a rational function, depending on the eigenvalues $\lambda_1,\dots,\lambda_n$, the matrix $L$, and another matrix $X$, which, in a suitable basis, has the particle-positions $q_1,\dots,q_n$ along its diagonal. The formula was verified using two different approaches, first in \cite{FM17} and then in \cite{Go16}.

Our aim in the forthcoming sections is to adapt these methods to more general Calogero-Moser systems and the moduli spaces of their solutions. Formally, the resulting formulas for the Calogero-Moser space associated with the cyclic quiver look similar to the classical case \cite{FM17, Go16}. Hence, one may expect that the formulas may hold more generally to Calogero-Moser spaces associated with any ``nice'' quiver. For a detailed study of the geometry of moment maps for quiver representations, see \cite{Cr01}.

\section{Calogero-Moser space associated with the cyclic quiver}
\label{sec:quiver}

Let $m$ be a positive integer. In this section we introduce the  Calogero-Moser space associated with the affine Dynkin quiver $A^{(1)}_{m-1}$ shown in Figure \ref{fig:1} below.
\begin{figure}[h!]
\centering
\begin{tikzpicture}
\draw[thick,fill=black]
(0,0)circle(.1)node[yshift=-1em]{$1$}--
(1.5,0)circle(.1)node[yshift=-1em]{$2$}--
(3,0)node[fill=white,inner sep=1em]{$\dots$}--
(4.5,0)circle(.1)node[yshift=-1em]{$m-2$}--
(6,0)circle(.1)node[yshift=-1em]{$m-1$}--
(3,2.25)circle(.1)node[yshift=1em]{$0$}--(0,0);
\end{tikzpicture}
\caption{The cyclic quiver.}
\label{fig:1}
\end{figure}
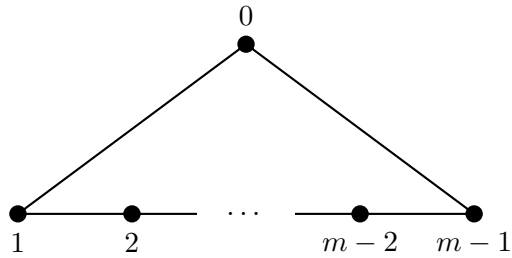

Starting from this quiver we first take the corresponding doubled quiver. This means that we replace each edge with a pair of edges with opposite orientation to each other. We also equip the quiver with a one-dimensional framing at the vertex 0, and construct the associated Calogero-Moser quiver variety. See Figure \ref{fig:2} for a particular example. The precise procedure of the construction is as follows.

\begin{figure}[ht!]
\centering
\begin{tikzpicture}[thick,rotate=90]
\def\r{3} \def\N{7}
\foreach \n [count=\ni from 0] in {1,...,\N}{
\draw({\r*cos(90+360/\N-\n*360/\N)},{\r*sin(90+360/\N-\n*360/\N)})circle(.3)node(V\ni){$V_{\ni}$};}
\draw(0,0)circle(.35)node(C){$V_\infty$};
\foreach \n [count=\ni from 0] in {1,...,\N}{\pgfmathtruncatemacro{\nn}{mod(\ni+1,\N)}
\draw[->](V\ni) edge[bend left=30] node[fill=white,inner sep=1pt]{$X_\ni$} (V\nn)
(V\nn) edge[bend left=30] node[fill=white,inner sep=1pt]{$P_\ni$} (V\ni);}
\draw[->](V0) edge[bend left=15] node[fill=white,inner sep=1pt]{$w_0$}(C)
(C) edge[bend left=15] node[fill=white,inner sep=1pt]{$v_0$}(V0);
\end{tikzpicture}
\caption{The doubled cyclic quiver for $m=7$ with a special framing.}
\label{fig:2}
\end{figure}
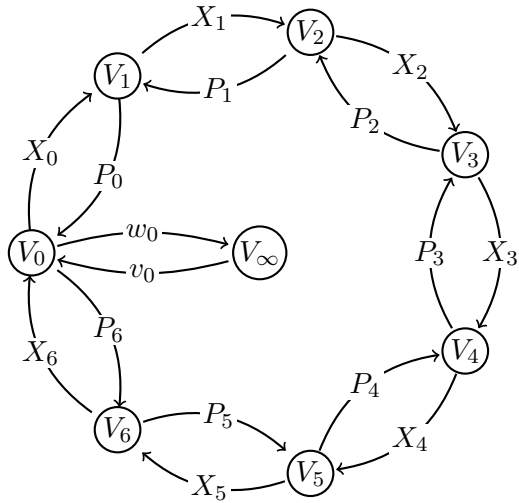

Fix a positive integer $n$ and let $V_0,V_1,\dots,V_{m-1}$ be vector spaces of dimension $n$ and $V_\infty$ be a one-dimensional vector space over the complex field $\C$. Let $\Z_m$ stand for the additive group $\Z/m\Z$ of integers modulo $m$, that is the cyclic group of order $m$. Let us consider the linear maps
\begin{equation}
X_i\colon V_i\to V_{i+1},\quad P_i\colon V_{i+1}\to V_i,\qquad i\in\Z_m
\label{}
\end{equation}
and by introducing a one-dimensional vector space $V_\infty$ over $\C$ we also define the linear maps
\begin{equation}
v_0\colon V_\infty\to V_0,\quad w_0\colon V_0\to V_\infty.
\label{v0-w0}
\end{equation}
Take the direct sum $V=V_0\oplus V_1\oplus\dots\oplus V_{m-1}$ and define the transformations $X,P\in\End(V)$ by
\begin{equation}
X(\vect{v}_0\oplus\vect{v}_1\oplus\dots\oplus\vect{v}_{m-1})=X_{m-1}\vect{v}_{m-1}\oplus X_0\vect{v}_0\oplus\dots\oplus X_{m-2}\vect{v}_{m-2}
\label{X}
\end{equation}
and
\begin{equation}
P(\vect{v}_0\oplus\vect{v}_1\oplus\dots\oplus\vect{v}_{m-1})=P_0\vect{v}_1\oplus P_1\vect{v}_2\oplus\dots\oplus P_{m-1}\vect{v}_0.
\label{P}
\end{equation}
Let $\1_V$ denote the identity map on $V$. The commutator $[X,P]\in\End(V)$ of  $X$ and $P$ can be expressed as
\begin{equation}
[X,P](\vect{v}_0\oplus\vect{v}_1\oplus\dots\oplus\vect{v}_{m-1})=\bigoplus_{i\in\Z_m}(X_{i-1}P_{i-1}-P_iX_i)\vect{v}_i.
\label{}
\end{equation}
Extend $v_0$, $w_0$ introduced in \eqref{v0-w0} to maps $v\colon V_\infty\to V$ and $w\colon V\to V_\infty$, respectively, by
\begin{equation}
v(z)=v_0(z)\oplus\vect{0}_{V_1}\oplus\dots\oplus\vect{0}_{V_{m-1}}
\quad\text{and}\quad
w(\vect{v}_0\oplus\vect{v}_1\oplus\dots\oplus\vect{v}_{m-1})=w_0(\vect{v}_0).
\label{v-w}
\end{equation}

An $m$-tuple $g=(g_0,g_1,\dots,g_{m-1})\in\C^m$ is called regular if
\begin{equation}
g_0+\dots+g_{m-1}\neq 0\quad\text{and}\quad k(g_0+\dots+g_{m-1})\neq g_h+\dots+g_{i-1}
\label{}
\end{equation}
for all $k\in\Z$ and $1\leq h<i\leq m-1$. We introduce $g\1_V\in\End(V)$ via
\begin{equation}
g\1_V(\vect{v}_0\oplus\vect{v}_1\oplus\dots\oplus\vect{v}_{m-1})=
g_0\vect{v}_0\oplus g_1\vect{v}_1\oplus\dots\oplus g_{m-1}\vect{v}_{m-1}.
\label{}
\end{equation}
Let $C_{n,g}^m$ stand for the space of quadruples $(X,Y,v,w)$ satisfying
\begin{equation}
[X,P]=g\1_V+vw.
\label{constraint}
\end{equation}
The group $\GL(V)\subset\End(V)$ of invertible linear transformations acts on $C_{n,g}^m$ by
\begin{equation}
M\cdot(X,P,v,w)=(MXM^{-1},MPM^{-1},Mv,wM^{-1}),\quad M\in\GL(V).
\label{}
\end{equation}
If $g$ is regular, then this action is free. The equivariant Calogero-Moser space $\cC_{n,g}^m$ for the cyclic group $\mathbb{Z}_m$ is defined as the space of orbits, i.e.
\begin{equation}
\cC_{n,g}^m\cong C_{n,g}^m/\GL(V).
\label{}
\end{equation}
In the rest of the paper we will suppress the dependence of $\cC_{n,g}^m$ on $g$, and simply write $\cC_{n}^m$.

In \cite{CS17} Chalykh and Silantyev introduced local coordinates on the subset $\cC_{n}^{m,X}\subset\cC_{n}^m$ consisting of orbits with invertible and diagonalisable maps $X_i$. Namely, they diagonalised each of the $X_i$ by choosing such an $M\in\GL(V)$ that $Q=MXM^{-1}$, when written in the standard basis, has an $m$-by-$m$ block matrix structure with blocks of size $n$. The non-zero blocks are at positions $(i+1,i)$ and are of the form
\begin{equation}
\vect{q}:=[Q]_{i+1,i}=\diag(q_1,\dots,q_n),\qquad i\in\Z_m.
\label{}
\end{equation}
By using the group action and the constraint \eqref{constraint} they showed that each point of $\cC_{n}^{m,X}$ can be represented by $(Q,L,Mv,wM^{-1})$ with $Q$ as displayed above and $L=MPM^{-1}$ having a similar block matrix structure with non-zero blocks at positions $(i,i+1)$ whose components are
\begin{equation}
(L_i)_{j,k}:=([L]_{i,i+1})_{j,k}=\big(p_j-c_iq_j^{-1}\big)\delta_{j,k}+|g|q_j^iq_k^{m-i-1}(q_j^m-q_k^m)^{-1}(1-\delta_{j,k}),
\label{}
\end{equation}
$i\in\Z_m$, $j,k\in\{1,\dots,n\}$, where $p_1,\dots,p_n$ are arbitrary and $c_i$, $|g|$ are constants, namely
\begin{equation}
c_i=\sum_{r=0}^ig_r-\sum_{s=0}^{m-1}\frac{m-s}{m}g_s
\quad\text{and}\quad
|g|=\sum_{s=0}^{m-1}g_s.
\label{c_i-|g|}
\end{equation}
The maps $Mv,wM^{-1}$ are expressed as column and row vectors, respectively, with $m$ blocks of size $n$ each. The only non-zero blocks are the first ones, i.e.
\begin{equation}
[Mv]_0=(1\quad 1\quad\dots\quad 1)^\top
\quad\text{and}\quad
[wM^{-1}]_0=-|g|(1\quad 1\quad\dots\quad 1).
\label{}
\end{equation}
It was also proved in \cite{CS17} that these local coordinates $(p_j/m,q_j)$ are canonical. That is, the symplectic structure on $\cC_{n}^{m,X}$, obtained from the standard symplectic form on $C_{n}^m$, can be written as
\begin{equation}
\omega=m\sum_{j=1}^n dp_j\wedge dq_j.
\label{}
\end{equation}
The Hamiltonians can be written as
\begin{equation}
H_k(p,q)=\frac{1}{mk}\tr(L^{mk}),\quad k=1,\dots,n,
\label{eq:H_k}
\end{equation}
and for $m=1$ we have $H_1(p,q)=H(p,q)$ \eqref{eq:H} with $\gamma=-g_0^2$.

The same procedure can be repeated by introducing local coordinates $(\lambda_j,\phi_j)$ on the subset $\cC_{n}^{m,P}\subset\cC_{n}^m$ consisting of orbits with diagonalisable maps $P_i$. We denote the corresponding objects by putting a tilde over them. Namely, we have an invertible transformation $\tilde{M}\in\GL(V)$ such that the matrix of $\tilde{L}=\tilde{M}P\tilde{M}^{-1}$ has diagonal blocks at positions $(i,i+1)$
\begin{equation}
\blambda:=[\tilde{L}]_{i,i+1}=\diag(\lambda_1,\dots,\lambda_n),\qquad i\in\Z_m,
\label{tilde-L}
\end{equation}
the matrix of $\tilde{Q}=\tilde{M}X\tilde{M}^{-1}$ has non-zero blocks at positions $(i+1,i)$
\begin{equation}
(\tilde{Q}_i)_{j,k}:=([\tilde{Q}]_{i+1,i})_{j,k}=\big(\phi_j+c_i\lambda_j^{-1}\big)\delta_{j,k}-|g|\lambda_j^{m-i-1}\lambda_k^i(\lambda_j^m-\lambda_k^m)^{-1}(1-\delta_{j,k}).
\label{}
\end{equation}
The maps $\tilde{v}=\tilde{M}v$ and $\tilde{w}=w\tilde{M}^{-1}$ can be written as vectors of size $mn$ with only the first $n$ entries being non-zero:
\begin{equation}
[\tilde{v}]_0=(1\quad 1\quad\dots\quad 1)^\top
\quad\text{and}\quad
[\tilde{w}]_0=|g|(1\quad 1\quad\dots\quad 1).
\label{}
\end{equation}
Finally, the symplectic structure on $\cC_{n}^{m,P}$ can be written as
\begin{equation}
\tilde{\omega}=m\sum_{j=1}^n d\lambda_j\wedge d\phi_j.
\label{eq:omegasumw}
\end{equation}
hence the Poisson bracket of two functions $f,g\in C^\infty(\cC_n^{m,P})$ is given by
\begin{equation}
\{f,g\}=m\sum_{j=1}^n\bigg(\frac{\partial f}{\partial\lambda_j}\frac{\partial g}{\partial\phi_j}
-\frac{\partial f}{\partial\phi_j}\frac{\partial g}{\partial\lambda_j}\bigg).
\label{eq:Poisson}
\end{equation}
The Hamiltonians $H_k$ \eqref{eq:H_k}, when expressed in terms of $(\lambda_j,\phi_k)$, take a much simpler form
\begin{equation}
H_k=\frac{1}{mk}\tr(\tilde{L}^{mk})=\frac{1}{k}(\lambda_1^{mk}+\dots+\lambda_n^{mk}),\quad k=1,\dots,n.
\end{equation}

\section{Canonical spectral coordinates in the spinless case}
\label{sec:CM-spaces}

Now we turn to the task of finding explicit formulas for variables conjugate to the eigenvalues $\lambda_1,\dots,\lambda_n$ of $P_i$, i.e. such functions $\theta_1,\dots,\theta_n$ in involution that
\begin{equation}
\{\lambda_j,\theta_k\}=\delta_{jk},\quad j,k=1,\dots,n.
\label{}
\end{equation}
It follows from \eqref{eq:Poisson} that the variables $\phi_1/m,\dots,\phi_n/m$ are
such functions. Proposition \ref{prop:1} below provides explicit formulas for $\phi_k$ in terms of $\lambda$. To formulate the statement we need the following functions on $C_n^m$ that depend on an extra variable $z$:
\begin{gather}
A(z)=\det(z\1_V-P),\label{eq:A}\\ 
C(z)=\tr(X\,\adj(z\1_V-P)vw),\label{eq:C}\\
D(z)=\tr(X\,\adj(z\1_V-P)).\label{eq:D}
\end{gather}
Notice that these functions, besides $z$, depend only on the \emph{class} of the quadruple $(X,P,v,w)$ under the $\GL(V)$-action (we have suppressed this dependence). Therefore $A,C,D$ descend to well-defined functions on $\cC_{n}^m$ for which we use the same notation. Here $X,P,v,w$ are given by \eqref{X}, \eqref{P}, \eqref{v-w} and $\adj$ denotes the adjugate map. We remark that $C(z)$ can also be written as
\begin{equation} C(z)=wX\adj(z\1_V-P)v.  \end{equation}

\begin{lemma}\label{lemma:1}
The characteristic polynomial $A(z)=\det(z\1_V-P)$ can be written in terms of $\lambda_1,\dots,\lambda_n$ as
\begin{equation}
A(z)=\prod_{j=1}^n(z^m-\lambda_j^m).
\label{}
\end{equation}
\end{lemma}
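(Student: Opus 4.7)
The plan is to exploit the $\GL(V)$-invariance of $A(z)=\det(z\1_V-P)$ and evaluate it on the representative provided by \eqref{tilde-L}. Writing $\tilde L=\tilde MP\tilde M^{-1}$ in the basis obtained by concatenating bases of $V_0,V_1,\dots,V_{m-1}$, the matrix $\tilde L$ is the $mn\times mn$ block matrix whose only non-zero blocks equal $\blambda=\diag(\lambda_1,\dots,\lambda_n)$ and sit in positions $(i,i+1\bmod m)$. Hence $A(z)=\det(z\1_V-\tilde L)$ reduces to computing the characteristic polynomial of this explicit cyclic block-shift matrix.

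The key structural observation is that $\tilde L$ is a Kronecker product, $\tilde L=C_m\otimes\blambda$, where $C_m$ is the $m\times m$ cyclic shift matrix with $(C_m)_{i,i+1\bmod m}=1$. Equivalently, $\tilde L=(C_m\otimes\1_n)(\1_m\otimes\blambda)$ is a product of two commuting matrices, whose eigenvalues are, respectively, the $m$-th roots of unity $\omega^k$ with $\omega=e^{2\pi\ri/m}$, and $\lambda_1,\dots,\lambda_n$. Simultaneous diagonalisation (on the dense open set where the $\lambda_j$ are distinct and pairwise non-proportional by an $m$-th root of unity, and then by continuity in general) shows that the spectrum of $\tilde L$ consists of the $mn$ products $\omega^k\lambda_j$ with $k\in\Z_m$ and $j\in\{1,\dots,n\}$.

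The computation then closes at once:
\begin{equation*}
A(z)=\det(z\1_V-\tilde L)=\prod_{k=0}^{m-1}\prod_{j=1}^n(z-\omega^k\lambda_j)=\prod_{j=1}^n(z^m-\lambda_j^m),
\end{equation*}
via the standard factorisation $z^m-\lambda_j^m=\prod_{k=0}^{m-1}(z-\omega^k\lambda_j)$. I do not foresee any genuine obstacle; the only mildly delicate point is matching the block-indexing conventions of \eqref{tilde-L} with those adopted for $C_m$. A shorter alternative, if one wishes to avoid Kronecker products altogether, is to observe that conjugation by the block-diagonal matrix with blocks $\eta^i\1_n$ sends $\tilde L$ to $\eta^{-1}\tilde L$ for every $m$-th root of unity $\eta$, whence $A(z)=A(\eta z)$ forces $A$ to be a polynomial in $z^m$; the value of this polynomial is then pinned down by the identity $\tilde L^m=\1_m\otimes\blambda^m$ together with $\det(\mu\1_V-\tilde L^m)=\prod_j(\mu-\lambda_j^m)^m$.
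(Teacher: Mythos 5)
Your proof is correct, and it takes a genuinely different route from the paper. The paper gives two proofs, both of which partition the block matrix $z\1_{mn}-\tilde{L}$ and apply the Schur-complement determinant formula $\det\begin{pmatrix}\alpha&\beta\\\gamma&\delta\end{pmatrix}=\det(\alpha)\det(\delta-\gamma\alpha^{-1}\beta)$, either iteratively ($m-1$ times, peeling off one block row at a time) or in one step after inverting the upper-triangular block $\delta$ explicitly; both versions need the auxiliary assumption $z\neq 0$, removed afterwards by polynomial continuity. Your identification $\tilde{L}=C_m\otimes\blambda$ replaces all of this by the standard spectral fact for Kronecker products: since $C_m$ is a permutation matrix it is diagonalisable, $C_m=UDU^{-1}$ with $D=\diag(1,\omega,\dots,\omega^{m-1})$, so $\tilde{L}=(U\otimes\1_n)(D\otimes\blambda)(U\otimes\1_n)^{-1}$ is conjugate to the diagonal matrix with entries $\omega^k\lambda_j$, and the factorisation $\prod_{k=0}^{m-1}(z-\omega^k\lambda_j)=z^m-\lambda_j^m$ finishes the computation. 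This makes your density/continuity hedge (distinct, non-proportional $\lambda_j$) unnecessary: no genericity is needed because the conjugation above is valid for arbitrary $\blambda$. What your approach buys is brevity, no case split on $z=0$, and a conceptual explanation of \emph{why} the answer is a polynomial in $z^m$ (the spectrum is invariant under multiplication by $m$-th roots of unity, which is exactly the $G_\Delta$-symmetry the paper exploits later in Section \ref{sec:interpolation}); what the paper's Proof \#2 buys instead is the explicit inverse $[\delta^{-1}]_{h,i}$, which is essentially the content of Lemma \ref{lemma:2} and is reused in the proofs of Propositions \ref{prop:1} and \ref{prop:2}, so the block-matrix computation is not wasted effort in context. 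Your second, symmetry-plus-$\tilde{L}^m$ alternative is also sound.
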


\begin{proof}[Proof \#1]
Notice that $A(z)$ is invariant under conjugation, i.e. constant along orbits of $\GL(V)$. Thus we can use $\tilde{L}=\tilde{M}P\tilde{M}^{-1}$ instead of $P$. Let us express $A(z)$ in the basis in which the matrix of $\tilde{L}$ is the one displayed in \eqref{tilde-L}. This means that $A(z)$ can be written as 
\begin{equation}
A(z)=\left\vert\begin{array}{c;{6pt/2pt}cccc}
z\1_n&-\blambda&0&\dots&0\\\hdashline[6pt/2pt]
0&z\1_n&-\blambda&\ddots&\vdots\\
\vdots&\ddots&\ddots&\ddots&\vdots\\
0&\ddots&\ddots&\ddots&-\blambda\\
-\blambda&0&\dots&0&z\1_n
\end{array}\right\vert_{m\times m},
\label{partition}
\end{equation}
where the index $m\times m$ indicates that the number of blocks in each row and column is $m$.
If we partition the matrix as indicated by the dashed lines and apply the block matrix determinant formula
\begin{equation}
\det\left[\begin{array}{c;{6pt/2pt}c}
\alpha&\beta\\\hdashline[6pt/2pt]
\gamma&\delta
\end{array}\right]=\det(\alpha)\det(\delta-\gamma\alpha^{-1}\beta),
\label{block-det-formula-1}
\end{equation}
(with the assumption that $z\neq 0$) then we get
\begin{equation}
A(z)=z^n\left\vert\begin{array}{c;{6pt/2pt}cccc}
z\1_n&-\blambda&0&\dots&0\\\hdashline[6pt/2pt]
0&z\1_n&-\blambda&\ddots&\vdots\\
\vdots&\ddots&\ddots&\ddots&\vdots\\
0&\ddots&\ddots&\ddots&-\blambda\\
-z^{-1}\blambda^2&0&\dots&0&z\1_n
\end{array}\right\vert_{(m-1)\times(m-1)}.
\label{}
\end{equation}
By iterating this process $(m-2)$ times we obtain
\begin{equation}
A(z)=z^{(m-2)n}\left\vert\begin{array}{c;{6pt/2pt}c}
z\1_n&-\blambda\\\hdashline[6pt/2pt]
-z^{-(m-2)}\blambda^{m-1}&z\1_n
\end{array}\right\vert_{2\times 2}.
\label{}
\end{equation}
Applying the determinant formula \eqref{block-det-formula-1} one more time yields
\begin{equation}
A(z)=z^{(m-1)n}\det(z\1_n-z^{-(m-1)}\blambda^m)=\det(z^m\1_n-\blambda^m)
=\prod_{j=1}^n(z^m-\lambda_j^m).
\label{}
\end{equation}
This concludes the proof.
\end{proof}

Let us give an alternative and more direct proof.

\begin{proof}[Proof \#2]
In this proof we partition the matrix the same way as in \eqref{partition}, but apply a different version of the block matrix determinant formula, namely
\begin{equation}
\det\left[\begin{array}{c;{6pt/2pt}c}
\alpha&\beta\\\hdashline[6pt/2pt]
\gamma&\delta
\end{array}\right]=\det(\delta)\det(\alpha-\beta\delta^{-1}\gamma).
\label{block-det-formula-2}
\end{equation}
This requires the calculation of the determinant and inverse of the bottom right block. Fortunately, this block is an upper triangular matrix of size $(m-1)n$. Its determinant is
\begin{equation}
\det(\delta)=z^{(m-1)n},
\label{}
\end{equation}
and (with the assumption that $z\neq 0$) its inverse exists and is, of course, also upper triangular. The $(h,i)$-th block of $\delta^{-1}$ is
\begin{equation}
[\delta^{-1}]_{h,i}=z^{h-i-1}\blambda^{i-h},\ \text{if}\ h\leq i,\qquad [\delta^{-1}]_{h,i}=\mathbf{0}_{n},\ \text{if}\ h>i.
\label{}
\end{equation}
The product $\beta\delta^{-1}\gamma$ is simply $\blambda^2[\delta^{-1}]_{1,m-1}$. Substituting everything into \eqref{block-det-formula-2} yields
\begin{equation}
A(z)=z^{(m-1)n}\det(z\1_n-z^{-(m-1)}\blambda^m)=\det(z^m\1_n-\blambda^m)=\prod_{j=1}^n(z^m-\lambda_j^m),
\label{}
\end{equation}
which concludes the proof.
\end{proof}

\begin{lemma}\label{lemma:2}
The inverse of $z\1_V-P$ can be written explicitly in terms of $\lambda_1,\dots,\lambda_n$ as an $m\times m$ block matrix with blocks of size $n$ of the form
\begin{equation}
[(z\1_{mn}-\tilde{L})^{-1}]_{h,i}=z^{m-(i-h+1)}(z^m-\blambda^m)^{-1}\blambda^{i-h},\quad h,i\in\Z_m,\label{z-P-inverse}
\end{equation}
where the exponents $m-(i-h+1)$ and $i-h$ are understood modulo $m$.

If one does not wish to use mod $m$ exponents one can write $[(z\1_{mn}-\tilde{L})^{-1}]$ as
\begin{align}
[(z\1_{mn}-\tilde{L})^{-1}]_{h,i}&=z^{h-i-1}(z^m-\blambda^m)^{-1}\blambda^{m-(h-i)},\quad  h,i=0,\dots,m-1,\ h>i,\label{z-P-inverse-1}\\
[(z\1_{mn}-\tilde{L})^{-1}]_{h,i}&=z^{m-(i-h+1)}(z^m-\blambda^m)^{-1}\blambda^{i-h},\quad h,i=0,\dots,m-1,\ h\leq i.\label{z-P-inverse-2}
\end{align}
\end{lemma}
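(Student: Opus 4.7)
The plan is to avoid a direct block-wise inversion of the $m\times m$ block matrix and instead exploit the fact that $\tilde L$ acts as a $\blambda$-weighted cyclic shift. A one-step induction based on the block structure \eqref{tilde-L} shows that
\begin{equation*}
[\tilde L^k]_{h,i} \;=\; \blambda^k\, \delta_{i,\, h+k \bmod m}, \qquad k = 0, 1, 2, \ldots,
\end{equation*}
since each application of $\tilde L$ shifts the block column index up by one and multiplies by $\blambda$ (which commutes with itself, so no ordering issues arise). Taking $k=m$ gives $\tilde L^m = \blambda^m \oplus \blambda^m \oplus \cdots \oplus \blambda^m$ block-diagonally, and consequently $z^m \1_{mn} - \tilde L^m$ is block-diagonal with each diagonal block equal to $z^m \1_n - \blambda^m$.

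Next, because $z\1_{mn}$ and $\tilde L$ commute, the polynomial identity
\begin{equation*}
z^m \1_{mn} - \tilde L^m \;=\; (z \1_{mn} - \tilde L) \sum_{k=0}^{m-1} z^{m-1-k}\, \tilde L^k
\end{equation*}
holds in $\End(V)$. On the dense open subset where $z^m \neq \lambda_j^m$ for all $j$ — which by Lemma~\ref{lemma:1} is exactly the locus on which $\det(z\1_V - P) \neq 0$ — the block-diagonal factor on the left is invertible, so one may rearrange to obtain
\begin{equation*}
(z \1_{mn} - \tilde L)^{-1} \;=\; (z^m \1_{mn} - \tilde L^m)^{-1} \sum_{k=0}^{m-1} z^{m-1-k}\, \tilde L^k.
\end{equation*}

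Reading off the $(h,i)$-th block is now immediate: of the $m$ terms in the sum, exactly one, the one with $k \equiv i-h \pmod m$, contributes to the $(h,i)$-th block, and the block-diagonal prefactor inserts $(z^m\1_n - \blambda^m)^{-1}$. This yields \eqref{z-P-inverse} directly, since $m-1-((i-h) \bmod m)$ matches $m-((i-h+1) \bmod m)$. The two unfolded forms \eqref{z-P-inverse-1} and \eqref{z-P-inverse-2} then follow by taking the explicit representative $(i-h) \bmod m = m-(h-i)$ when $h > i$, and $(i-h) \bmod m = i-h$ when $h \leq i$. No serious obstacle arises: the argument reduces to the cyclic-shift identity for $\tilde L^k$ and the standard factorisation of $z^m\1 - \tilde L^m$, with the remainder being routine bookkeeping with mod-$m$ exponents.
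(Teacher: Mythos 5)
Your proof is correct, but it takes a genuinely different route from the paper. The paper's proof is a one-line verification: it simply asserts that a direct check confirms that the matrix defined by \eqref{z-P-inverse-1}--\eqref{z-P-inverse-2} multiplies $z\1_{mn}-\tilde{L}$ to give $\1_{mn}$, i.e.\ the formula is guessed and then verified. You instead \emph{derive} the formula: you observe that $\tilde{L}$ is a $\blambda$-weighted cyclic shift, so $[\tilde{L}^k]_{h,i}=\blambda^k\delta_{i,h+k \bmod m}$ and in particular $\tilde{L}^m$ is block-diagonal with blocks $\blambda^m$; you then invoke the commuting-variable factorisation $z^m\1_{mn}-\tilde{L}^m=(z\1_{mn}-\tilde{L})\sum_{k=0}^{m-1}z^{m-1-k}\tilde{L}^k$ and invert the block-diagonal factor on the locus $z^m\neq\lambda_j^m$, which by Lemma~\ref{lemma:1} is exactly where $z\1_V-P$ is invertible. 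All steps check out (the factors $(z^m\1_n-\blambda^m)^{-1}$ and the powers of $\blambda$ commute, so the order of the two factors is immaterial), and reading off the single surviving term $k\equiv i-h\pmod m$ reproduces \eqref{z-P-inverse-1}--\eqref{z-P-inverse-2} exactly. What your approach buys is an explanation of where the formula comes from rather than a post hoc confirmation, at the cost of a few more lines; the paper's verification is shorter but opaque. One cosmetic caveat: in the justification ``$m-1-((i-h)\bmod m)$ matches $m-((i-h+1)\bmod m)$'' the two expressions differ by $m$ in the edge case $i-h\equiv m-1$, where the correct exponent (as a representative in $\{0,\dots,m-1\}$) is $0$, not $m$; your unfolded formulas are nevertheless the correct ones, so this is only a bookkeeping remark, not a gap.
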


\begin{proof}
A simple check confirms that the matrix defined by formulas \eqref{z-P-inverse-1}-\eqref{z-P-inverse-2} is such that $(z\1_{mn}-\tilde{L})(z\1_{mn}-\tilde{L})^{-1}=\1_{mn}$.
\end{proof}

We recall that the adjugate of an invertible linear transformation $M$ can be written as $\adj(M)=\det(M)M^{-1}$, hence assuming that $z\1_V-P$ is invertible we have the following
\begin{equation}
\adj(z\1_V-P)=A(z)(z\1_V-P)^{-1},
\label{eq:adj-det}
\end{equation}
where $A(z)=\det(z\1_V-P)$ as defined in \eqref{eq:A}.

The next statement gives Theorem \ref{thm:rel} for the spinless CM space $\cC_{n}^m$.
\begin{proposition}\label{prop:1}
For a point $[(X,P,v,w)] \in \mathcal{C}_n^m$ let 
\begin{equation}
r(z)=\frac{D(z)}{A'(z)} \in \C(z),
\end{equation}
where $A(z)$ and $D(z)$ are the functions defined in \eqref{eq:A} and \eqref{eq:D}, respectively. Then the variables $\phi_1,\dots,\phi_n$ can be expressed as
\begin{equation}
\phi_k=r(\lambda_k),\quad k=1,\dots,n.
\label{}
\end{equation}
\end{proposition}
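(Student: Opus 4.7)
The plan is to exploit the $\GL(V)$-invariance of $A(z)$ and $D(z)$ and work in the basis in which $P$ takes the form $\tilde{L}$ of \eqref{tilde-L} and $X$ takes the form $\tilde{Q}$. Lemma \ref{lemma:1} already provides $A(z)$ in the desired factorized form, so the task reduces to computing $D(z)$ in the coordinates $(\lambda_j,\phi_j)$. Using the identity \eqref{eq:adj-det} we rewrite
$D(z)=A(z)\,\tr\big(\tilde{Q}(z\1_V-\tilde{L})^{-1}\big)$,
after which the explicit formula from Lemma \ref{lemma:2} can be applied directly.

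The next step is a block trace computation. Since the only non-zero blocks of $\tilde{Q}$ sit at positions $(i+1,i)$ for $i\in\Z_m$, only the $(i,i+1)$ blocks of $(z\1_V-\tilde{L})^{-1}$ contribute to the trace. By Lemma \ref{lemma:2} each of these blocks equals $z^{m-2}(z^m-\blambda^m)^{-1}\blambda$, which is a diagonal matrix; consequently only the diagonal entries $(\tilde{Q}_i)_{j,j}=\phi_j+c_i\lambda_j^{-1}$ survive. Summing over $j\in\{1,\dots,n\}$ and over $i\in\Z_m$ gives
\begin{equation*}
\tr\big(\tilde{Q}(z\1_V-\tilde{L})^{-1}\big)=z^{m-2}\sum_{j=1}^n\frac{m\phi_j\lambda_j+\sum_{i=0}^{m-1}c_i}{z^m-\lambda_j^m}.
\end{equation*}
A short direct computation using the definitions \eqref{c_i-|g|} shows that $\sum_{i=0}^{m-1}c_i=0$ (the inner sum $\sum_{r=0}^i g_r$ rearranges to $\sum_r(m-r)g_r$, which exactly cancels the second term). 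Combining this vanishing with the telescoping $A(z)/(z^m-\lambda_j^m)=\prod_{l\neq j}(z^m-\lambda_l^m)$ yields the closed form
\begin{equation*}
D(z)=z^{m-2}\sum_{j=1}^n m\phi_j\lambda_j\prod_{l\neq j}(z^m-\lambda_l^m).
\end{equation*}

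Finally, on the dense open subset where the values $\lambda_j^m$ are pairwise distinct, evaluating at $z=\lambda_k$ kills every summand with $j\neq k$ because the factor indexed by $l=k$ in the product vanishes, leaving $D(\lambda_k)=m\phi_k\lambda_k^{m-1}\prod_{l\neq k}(\lambda_k^m-\lambda_l^m)$. Differentiating the formula of Lemma \ref{lemma:1} and evaluating at the same point gives $A'(\lambda_k)=m\lambda_k^{m-1}\prod_{l\neq k}(\lambda_k^m-\lambda_l^m)$, and dividing the two expressions produces $\phi_k=r(\lambda_k)$ as required. The main obstacle is the block trace manipulation in the second step, where care is needed with the cyclic indexing in Lemma \ref{lemma:2} and with the identity $\sum_{i=0}^{m-1}c_i=0$; both are elementary, but without them $D(z)$ does not collapse to the clean single-pole expression that makes the final evaluation at $z=\lambda_k$ immediate.
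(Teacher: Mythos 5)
Your proposal is correct and follows essentially the same route as the paper: gauge to $(\tilde Q,\tilde L)$, use Lemma \ref{lemma:2} together with $\adj(z\1_V-P)=A(z)(z\1_V-P)^{-1}$ to reduce $D(z)$ to a block trace picking out the diagonal entries of the $\tilde Q_i$, then evaluate at $z=\lambda_k$ and divide by $A'(\lambda_k)$, with $\sum_{i=0}^{m-1}c_i=0$ removing the $c_i$-contribution. The only (immaterial) difference is that you invoke $\sum_i c_i=0$ before evaluating at $z=\lambda_k$, whereas the paper does so after forming the quotient $D(\lambda_k)/A'(\lambda_k)$.
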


\begin{proof}
Since $A(z)$ and $D(z)$ are both invariant under conjugation by elements of $\GL(V)$, using $\tilde{Q}=\tilde{M}X\tilde{M}^{-1}$ instead of $X$ and $\tilde{L}=\tilde{M}P\tilde{M}^{-1}$ instead of $P$ in these functions gives the same results. We already expressed $A(z)$ in terms of $\lambda_1,\dots,\lambda_n$ in Lemma \ref{lemma:1}, so let us consider $D(z)$ and calculate the diagonal blocks of $\tilde{Q}\,\adj(z\1_{mn}-\tilde{L})$. These blocks can be calculated by utilizing \eqref{eq:adj-det} and Lemma \ref{lemma:2}. Namely, we get
\begin{equation}
[\tilde{Q}\,\adj(z\1_{mn}-\tilde{L})]_{i,i}=\tilde{Q}_{i-1}A(z)z^{m-2}\blambda(z^m-\blambda^m)^{-1},\quad i\in\Z_m.
\label{diag-ii}
\end{equation}
The function $D(z)$ can be written in terms of $\lambda_1,\dots,\lambda_n$ as
\begin{equation}
D(z)=\tr(\tilde{Q}\,\adj(z\1_{mn}-\tilde{L}))=\sum_{i=0}^{m-1}\tr([\tilde{Q}\,\adj(z\1_{mn}-\tilde{L})]_{i,i}).
\label{eq:dzdef1}
\end{equation}
Plugging \eqref{diag-ii} into this formula gives
\begin{equation}
D(z)=\sum_{i=0}^{m-1}\tr(\tilde{Q}_i A(z)z^{m-2}\blambda(z^m-\blambda^m)^{-1})
=\sum_{i=0}^{m-1}\sum_{j=1}^n(\phi_j-c_i\lambda_j^{-1})\lambda_jz^{m-2}\prod_{\substack{\ell=1\\(\ell\neq j)}}^n(z^m-\lambda_\ell^m).
\label{D(z)}
\end{equation}
Substituting $z=\lambda_k$ causes all terms with $j\neq k$ to vanish leaving
\begin{equation}
D(\lambda_k)=\sum_{i=0}^{m-1}(\phi_k-c_i\lambda_k^{-1})\lambda_k^{m-1}\prod_{\substack{\ell=1\\(\ell\neq k)}}^n(\lambda_k^m-\lambda_\ell^m).
\label{D-lambda-k}
\end{equation}
Differentiating $A(z)$ with respect to $z$ yields
\begin{equation}
A'(z)=mz^{m-1}\sum_{j=1}^n\prod_{\substack{\ell=1\\(\ell\neq j)}}^n(z^m-\lambda_\ell^m),
\label{eq:azpdef1}
\end{equation}
which at $z=\lambda_k$ takes the following form
\begin{equation}
A'(\lambda_k)=m\lambda_k^{m-1}\prod_{\substack{\ell=1\\(\ell\neq k)}}^n(\lambda_k^m-\lambda_\ell^m).
\label{A'-lambda-k}
\end{equation}
Putting formulas \eqref{D-lambda-k} and \eqref{A'-lambda-k} together gives
\begin{equation}
\frac{D(\lambda_k)}{A'(\lambda_k)}=\frac{1}{m}\sum_{i=0}^{m-1}(\phi_k-c_i\lambda_k^{-1})=\phi_k+\frac{c_0+\dots+c_{m-1}}{m\lambda_k}.
\label{}
\end{equation}
By using \eqref{c_i-|g|} a simple calculation reveals that $c_0+\dots+c_{m-1}=0$ leaving us with
\begin{equation}
\frac{D(\lambda_k)}{A'(\lambda_k)}=\phi_k
\label{}
\end{equation}
and the proof is complete.
\end{proof}

Next, we will prove the  analogue of Sklyanin's formula \cite{Go16,Sk09} in the equivariant case, which provides another set of variables $\theta_1,\dots,\theta_n$ conjugate to $\lambda_1,\dots,\lambda_n$. The result gives Theorem \ref{thm:mod} for $\mathcal{C}_n^m$.
\begin{proposition}\label{prop:2}
For a point $[(X,P,v,w)] \in \mathcal{C}_n^m$ let us define the function
\begin{equation}
s(z)=\frac{C(z)}{|g|A'(z)} \in \C(z)
\end{equation}
with $A(z)$ and $C(z)$ defined in \eqref{eq:A} and \eqref{eq:C}, respectively, and use it to define the variables $\theta_1,\dots,\theta_n$ as
\begin{equation}
\theta_k=s(\lambda_k),\quad k=1,\dots,n.
\label{Sklyanin}
\end{equation}
Then $\theta_k$ can be written as
\begin{equation}
\theta_k=\frac{\phi_k}{m}+f_k(\lambda_1,\dots,\lambda_n),\quad k=1,\dots,n,
\label{15}
\end{equation}
with such $\lambda$-dependent functions $f_1,\dots,f_n$ that
\begin{equation}
\frac{\partial f_j}{\partial\lambda_k}=\frac{\partial f_k}{\partial\lambda_j},\quad
j,k=1,\dots,n.
\label{16}
\end{equation}
In particular, the variables $\theta_1,\dots,\theta_n$ given by \eqref{Sklyanin} are conjugate to $\lambda_1,\dots,\lambda_n$, i.e. we have $\{\theta_j,\theta_k\}=0$ and $\{\lambda_j,\theta_k\}=\delta_{j,k}$, $j,k=1,\dots,n$.
\end{proposition}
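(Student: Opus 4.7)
The approach parallels the proof of Proposition \ref{prop:1}: since $A(z)$ and $C(z)$ are $\GL(V)$-invariant, evaluate them in the ``tilde'' gauge of Section \ref{sec:quiver}, where $\tilde{L}$, $\tilde{Q}$, $\tilde{v}$, $\tilde{w}$ take the explicit block forms described around \eqref{tilde-L}. The crucial structural observation for the Sklyanin-type formula is that the rank-one operator $\tilde{v}\tilde{w}$ is a block matrix whose only nonzero $n\times n$ block lies at position $(0,0)$ and equals $|g|\vect{e}\vect{e}^\top$ with $\vect{e}=(1,\dots,1)^\top$.

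By cyclicity of the trace,
\[
C(z)=\tr(\tilde{w}\tilde{Q}\,\adj(z\1_{mn}-\tilde{L})\tilde{v})=|g|\,\vect{e}^\top\tilde{Q}_{m-1}\,[\adj(z\1_{mn}-\tilde{L})]_{m-1,0}\,\vect{e},
\]
since the only nonzero block of $\tilde{Q}$ in row $0$ sits at column $m-1$. Applying Lemma \ref{lemma:2} (with $h=m-1$, $i=0$) together with \eqref{eq:adj-det} shows that $[\adj(z\1_{mn}-\tilde{L})]_{m-1,0}=A(z)z^{m-2}(z^m-\blambda^m)^{-1}\blambda$, which is a \emph{diagonal} matrix. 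Substituting the explicit entries of $\tilde{Q}_{m-1}$ and then setting $z=\lambda_k$, the product $\prod_{p\neq \ell}(\lambda_k^m-\lambda_p^m)$ coming from the diagonal factor vanishes unless $\ell=k$, so only the $k$-th column of $\tilde{Q}_{m-1}$ contributes. Combined with \eqref{A'-lambda-k}, this yields
\[
\theta_k=\frac{C(\lambda_k)}{|g|A'(\lambda_k)}=\frac{\phi_k}{m}+f_k(\lambda_1,\dots,\lambda_n),\qquad f_k=\frac{c_{m-1}}{m\lambda_k}-\frac{|g|}{m}\sum_{j\neq k}\frac{\lambda_k^{m-1}}{\lambda_j^m-\lambda_k^m}.
\]

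It remains to verify the closedness condition \eqref{16} and to convert it into the claimed Poisson brackets. For $j\neq k$, direct differentiation gives
\[
\frac{\partial f_k}{\partial\lambda_j}=\frac{|g|\,\lambda_j^{m-1}\lambda_k^{m-1}}{(\lambda_j^m-\lambda_k^m)^2},
\]
which is manifestly symmetric in $j\leftrightarrow k$; the $c_{m-1}/(m\lambda_k)$ term contributes nothing to any mixed partial. Since $f_k$ depends only on $\lambda$, formula \eqref{eq:Poisson} then immediately gives $\{\lambda_j,\theta_k\}=\{\lambda_j,\phi_k/m\}=\delta_{jk}$ and $\{\theta_j,\theta_k\}=\partial_{\lambda_k}f_j-\partial_{\lambda_j}f_k=0$. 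The main obstacle is the block-index bookkeeping that reduces $C(z)$ to a single off-diagonal block of $\adj(z\1_{mn}-\tilde{L})$ passing through a diagonal matrix; once that reduction is in place, both the specialization at $z=\lambda_k$ and the closedness check are essentially mechanical.
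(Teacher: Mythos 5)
Your proof is correct and follows essentially the same route as the paper: evaluate $C(z)$ in the tilde gauge, reduce it via the framing's rank-one structure to the single block $\tilde{Q}_{m-1}$ paired with the diagonal block $A(z)z^{m-2}(z^m-\blambda^m)^{-1}\blambda$ of the adjugate, specialize at $z=\lambda_k$, and verify the symmetry of the mixed partials of $f_k$. (Your $f_k$ differs from the paper's printed formula by the sign of the interaction sum, but this is immaterial since only the identity $\partial f_k/\partial\lambda_j=\partial f_j/\partial\lambda_k$ is used.)
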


\begin{proof}
Let us start with $C(z)$. Using gauge invariance we replace the quadruple $(X,P,v,w)$ by $(\tilde Q,\tilde{L},\tilde{v},\tilde{w})$ just as we did before, to get
\begin{equation}
C(z)=\tr(\tilde{Q}\,\adj(z\1_{mn}-\tilde{L})\tilde{v}\tilde{w})=\tr([\tilde{Q}\adj(z\1_{mn}-\tilde{L})]_{0,0}[\tilde{v}\tilde{w}]_{0,0}).
\label{}
\end{equation}
Using \eqref{diag-ii} with $i=0$ yields
\begin{equation}
C(z)=\tr(\tilde{Q}_{m-1}A(z)z^{m-2}\blambda(z^m-\blambda^m)^{-1}[\tilde{v}\tilde{w}]_{0,0}).
\label{}
\end{equation}
Since $[\tilde{v}\tilde{w}]_{0,0}$ is the $n\times n$ matrix that has $|g|$ for all of its components we get
\begin{equation}
C(z)=|g|\sum_{j,t=1}^n(\tilde{Q}_{m-1})_{j,t}\lambda_tz^{m-2}\prod_{\substack{\ell=1\\(\ell\neq t)}}^n(z^m-\lambda_\ell^m).
\label{C(z)-1}
\end{equation}
Substituting $z=\lambda_k$ into this expression yields
\begin{equation}
C(\lambda_k)=|g|\sum_{j=1}^n(\tilde{Q}_{m-1})_{j,k}\lambda_k^{m-1}\prod_{\substack{\ell=1\\(\ell\neq k)}}^n(\lambda_k^m-\lambda_\ell^m).
\label{C-lambda-k}
\end{equation}
Putting formulas \eqref{A'-lambda-k} and \eqref{C-lambda-k} together gives
\begin{equation}
\theta_k=\frac{C(\lambda_k)}{|g|A'(\lambda_k)}=\frac{1}{m}\sum_{j=1}^n(\tilde{Q}_{m-1})_{j,k}=\frac{\phi_k}{m}+f_k(\lambda_1,\dots,\lambda_n)
\label{}
\end{equation}
with
\begin{equation}
f_k(\lambda_1,\dots,\lambda_n)=\frac{c_{m-1}}{m\lambda_k}-\frac{|g|}{m\lambda_k}\sum_{\substack{\ell=1\\(\ell\neq k)}}^n\frac{\lambda_k^m}{\lambda_k^m-\lambda_\ell^m}.
\label{eq:fkdef1}
\end{equation}
This implies that $\{\lambda_j,\theta_k\}=\delta_{j,k}$, $j,k=1,\dots,n$. The partial derivative of $f_k$ with respect to $\lambda_j$ ($j\neq k$) is
\begin{equation}
\frac{\partial f_k}{\partial\lambda_j}=-\frac{|g|(\lambda_j\lambda_k)^{m-1}}{(\lambda_k^m-\lambda_j^m)^2},
\label{}
\end{equation}
which is clearly invariant under exchanging $k$ and $j$, and therefore we have
\begin{equation}
\frac{\partial f_k}{\partial\lambda_j}=\frac{\partial f_j}{\partial\lambda_k}
\label{}
\end{equation}
entailing $\{\theta_j,\theta_k\}=0$ for all $j,k=1,\dots,n$. This concludes the proof.
\end{proof}

\section{Calogero-Moser spaces with spin variables and their canonical variables}
\label{sec:spin-CM-spaces}

In this section, we derive the analogues of the results obtained in the previous section to models containing spin variables. Let $d$ be a positive integer. The affine Dynkin quiver $A^{(1)}_{m-1}$ in this case is equipped with a framing of dimension $d$ at each of its vertices, or, equivalently, with one $d$ dimensional framing which is connected to every other node. This latter formulation will be more convenient for us. Accordingly, we redefine the maps $v,w$ \eqref{v-w} to be $v\colon\C^{dm}\to V$, $w\colon V\to\C^{dm}$ given by
\begin{equation}
v(z_0,\dots,z_{m-1})=v_0(z_0)\oplus v_1(z_1)\oplus\dots\oplus v_{m-1}(z_{m-1})
\label{v-spin}
\end{equation}
and
\begin{equation}
w(\vect{v}_0\oplus\vect{v}_1\oplus\dots\oplus\vect{v}_{m-1})
=(w_0(\vect{v}_0),w_1(\vect{v}_1),\dots,w_{m-1}(\vect{v}_{m-1})),
\label{w-spin}
\end{equation}
where $v_i\colon\C^d\to V_i$, $w_i\colon V_i\to\C^d$ ($i\in\Z_m$) are linear maps. Points in the (equivariant) spin Calogero-Moser space are represented by quadruples $(X,P,v,w)$ satisfying
\begin{equation}
[X,P]=g\1_V+vw.
\label{}
\end{equation}
The space itself is denoted as $s\cC_{n}^m$, where we have suppressed the dependence on the stability vector $g$ as well as on the dimension $d$ of the space of internal states.
Two dual models of this space, similar to the ones presented in Section \ref{sec:CM-spaces}, can be given. For details, see \cite[Subsection 5.4.]{CS17}.

\begin{figure}[ht!]
\centering
\begin{tikzpicture}[thick,rotate=90]
\def\r{3} \def\N{7}
\foreach \n [count=\ni from 0] in {1,...,\N}{
\draw({\r*cos(90+360/\N-\n*360/\N)},{\r*sin(90+360/\N-\n*360/\N)})circle(.3)node(V\ni){$V_{\ni}$};}
\draw(0,0)circle(.35)node[inner sep=.5em](C){$V_\infty$};
\foreach \n [count=\ni from 0] in {1,...,\N}{\pgfmathtruncatemacro{\nn}{mod(\ni+1,\N)}
\draw[->](V\ni) edge[bend left=30] node[fill=white,inner sep=1pt]{$X_\ni$} (V\nn)
(V\nn) edge[bend left=30] node[fill=white,inner sep=1pt]{$P_\ni$} (V\ni);
\draw[->](V\ni) edge[bend left=15] node[fill=white,inner sep=1pt]{$w_\ni$}(C)
(C) edge[bend left=15] node[fill=white,inner sep=1pt]{$v_\ni$}(V\ni);}
\end{tikzpicture}
\caption{The doubled cyclic quiver for $m=7$ with the modified framing.}
\end{figure}
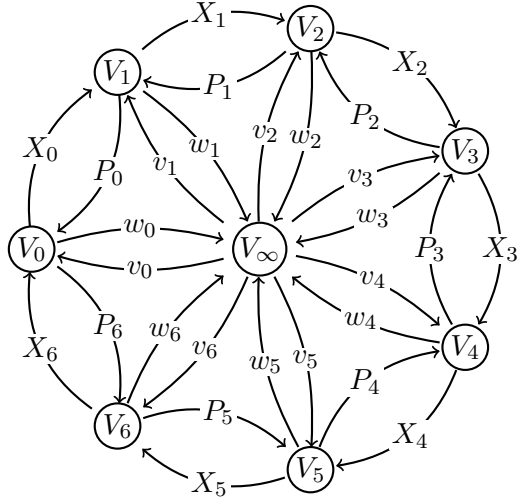

The objects we are most interested in are the ones corresponding to $\tilde{Q},\tilde{L}$. With a slight abuse of notation, we let $\tilde{Q},\tilde{L}$ denote the spin versions as well. They have the same block matrix structure as previously, but certain non-zero blocks are different. The map $\tilde{L}$ has the same matrix as before, so for non-zero blocks we have
\begin{equation}
\blambda=[\tilde{L}]_{i,i+1}=\diag(\lambda_1,\dots,\lambda_n),
\label{}
\end{equation}
while the non-zero blocks of the matrix of $\tilde{Q}$ are given by
\begin{equation}
(\tilde{Q}_i)_{j,j}=([\tilde{Q}]_{i+1,i})_{j,j}
=\phi_j+\lambda_j^{-1}\bigg(c_i+\sum_{r=0}^i[\tilde{v}_r \tilde{w}_r]_{j,j}
-\sum_{s=0}^{m-1}\frac{m-s}{m}[\tilde{v}_s\tilde{w}_s]_{j,j}\bigg)
\label{eq:tilde-Q-j,j}
\end{equation}
for $i\in\Z_m$, $j=1,\dots,n$ and
\begin{equation}
(\tilde{Q}_i)_{j,k}=([\tilde{Q}]_{i+1,i})_{j,k}=-\sum_{h=0}^{m-1}[\tilde{v}_{i-h} \tilde{w}_{i-h}]_{j,k}
\frac{\lambda_j^{m-h-1}\lambda_k^h}{\lambda_j^m-\lambda_k^m},
\label{eq:tilde-Q-j,k}
\end{equation}
for $i\in\Z_m$, $j,k=1,\dots,n$ $(j\neq k)$. (The index $i-h$ of $\tilde{v}$ and $\tilde{w}$ is understood modulo $m$.) The maps $\tilde{v},\tilde{w}$ have matrices that satisfy the equation
\begin{equation}
\sum_{i=0}^{m-1}[\tilde{v}_i \tilde{w}_i]_{j,j}=|g|
\label{spin-constraint}
\end{equation}
for all $j=1,\dots,n$. It was shown in \cite[Proposition 6.6]{CS17} that the local coordinates $(\lambda_j,\phi_j/m,[\tilde{w}_i]_{\alpha,j},[\tilde{v}_i]_{j,\alpha})$ on the spin Calogero-Moser space $s\cC_{n}^m$ are canonical, i.e. the reduced symplectic form on $s\cC_{n}^m$ can be locally written as follows
\begin{equation}
\tilde{\omega}=\sum_{j=1}^n\bigg(md\lambda_j\wedge d\phi_j+\sum_{i=0}^{m-1}\sum_{\alpha=1}^d[\tilde{w}_i]_{\alpha,j}\wedge[\tilde{v}_i]_{j,\alpha}\bigg).
\label{}
\end{equation}
The Poisson bracket of two functions $f,g$ on the spin Calogero-Moser space $s\cC_{n}^m$ can be locally computed via
\begin{equation}
\{f,g\}=\sum_{j=1}^n\bigg[m\bigg(\frac{\partial f}{\partial\lambda_j}\frac{\partial g}{\partial\phi_j}
-\frac{\partial f}{\partial\phi_j}\frac{\partial g}{\partial\lambda_j}\bigg)
+\sum_{i=0}^{m-1}\sum_{\alpha=1}^d\bigg(\frac{\partial f}{\partial[\tilde{w}_i]_{\alpha,j}}\frac{\partial g}{\partial[\tilde{v}_i]_{j,\alpha}}
-\frac{\partial f}{\partial[\tilde{v}_i]_{j,\alpha}}\frac{\partial g}{\partial[\tilde{w}_i]_{\alpha,j}}\bigg)\bigg].
\label{sPoisson}
\end{equation}

The expressions of the functions $A,C,D$ on the spin Calogero-Moser space are formally the same as in the spinless case. Namely,
\begin{gather}
A(z)=\det(z\1_V-P),\label{eq:A-s}\\ 
C(z)=\tr(X\,\adj(z\1_V-P)vw),\label{eq:C-s}\\
D(z)=\tr(X\,\adj(z\1_V-P)).\label{eq:D-s}
\end{gather}
Again, the dependence of them on the class of $(X,P,v,w)$ is suppressed. In the spin case the product $vw$ is understood to be the sum of the tensor (or dyadic) products
\begin{equation} vw=\bigoplus_{i=0}^{m-1} v_iw_i,\end{equation}
which can also be seen from the alternative expression
\begin{equation} C(z)=wX\adj(z\1_V-P)v.\end{equation}

The next result gives Theorem \ref{thm:rel} for the spin case.
\begin{proposition}\label{prop:3}
The variables $\phi_1,\dots,\phi_n$ can be expressed using the functions $A$ and $D$ as follows
\begin{equation}
\phi_k=\frac{D(\lambda_k)}{A'(\lambda_k)},\quad k=1,\dots,n.
\label{}
\end{equation}
\end{proposition}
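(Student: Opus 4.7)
The strategy is to follow the same route as in Proposition \ref{prop:1}, with the modified diagonal entries of $\tilde{Q}$ given by \eqref{eq:tilde-Q-j,j}. By gauge invariance we replace $(X,P)$ with $(\tilde{Q},\tilde{L})$ inside $A(z)$ and $D(z)$. Since $\tilde{L}$ has exactly the same block form as in the spinless case, Lemmas \ref{lemma:1} and \ref{lemma:2} apply verbatim: we still have $A(z)=\prod_j(z^m-\lambda_j^m)$, and each block of $\adj(z\1_{mn}-\tilde{L})$ remains a diagonal $n\times n$ matrix of the form $A(z)z^{h-i-1}(z^m-\blambda^m)^{-1}\blambda^{m-(h-i)}$ (with appropriate conventions).

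The crucial structural point is that $\tilde{Q}$ still has non-zero blocks only at positions $(i+1,i)$, and the adjugate blocks are diagonal. Hence, as in \eqref{diag-ii},
\[
[\tilde{Q}\,\adj(z\1_{mn}-\tilde{L})]_{i,i}=\tilde{Q}_{i-1}\,A(z)z^{m-2}\blambda(z^m-\blambda^m)^{-1},
\]
and taking the trace extracts only the diagonal entries $(\tilde{Q}_{i-1})_{j,j}$; the off-diagonal spin-corrections \eqref{eq:tilde-Q-j,k} make no contribution. Reindexing the sum in $i$ by $k=i-1\in\Z_m$, we obtain
\[
D(z)=z^{m-2}\sum_{k=0}^{m-1}\sum_{j=1}^n(\tilde{Q}_k)_{j,j}\,\lambda_j\prod_{\substack{\ell=1\\ \ell\neq j}}^{n}(z^m-\lambda_\ell^m).
\]

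The main new calculation is to show that $\sum_{k=0}^{m-1}(\tilde{Q}_k)_{j,j}=m\phi_j$, i.e., that all the $\phi$-independent correction terms in \eqref{eq:tilde-Q-j,j} cancel after summation over $k$. Writing $a_r:=[\tilde{v}_r\tilde{w}_r]_{j,j}$, the sum of the bracketed expressions in \eqref{eq:tilde-Q-j,j} equals
\[
\sum_{k=0}^{m-1}c_k+\sum_{k=0}^{m-1}\sum_{r=0}^{k}a_r-m\sum_{s=0}^{m-1}\frac{m-s}{m}a_s.
\]
The first sum vanishes (same calculation as in the spinless case, using the definition \eqref{c_i-|g|}). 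Swapping the order of summation in the double sum yields $\sum_{r=0}^{m-1}(m-r)a_r$, which exactly cancels the last term. This is structurally the very same combinatorial identity that made $\sum_i c_i=0$, now applied with the $g_r$ replaced by the individual spin contributions $a_r$. Substituting $z=\lambda_k$ and dividing by $A'(\lambda_k)=m\lambda_k^{m-1}\prod_{\ell\neq k}(\lambda_k^m-\lambda_\ell^m)$ then yields $\phi_k$, as required.

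The only delicate step I anticipate is the cyclic bookkeeping: one must be careful that $\tilde{Q}_{i-1}$ is interpreted modulo $m$, and that reindexing to $k=i-1$ indeed produces a full range over $\Z_m$. Once this is set up correctly, the proof reduces to the combinatorial cancellation above, and no use of the spin constraint \eqref{spin-constraint} is required for this particular statement.
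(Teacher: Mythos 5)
Your proof is correct and follows essentially the same route as the paper: gauge to $(\tilde Q,\tilde L)$, note that only the diagonal entries of the blocks $\tilde Q_i$ survive the trace against the diagonal adjugate blocks (so the off-diagonal spin terms drop out), and cancel the $\phi$-independent corrections by summing over $i\in\Z_m$, which reproduces the paper's identity \eqref{eq:sumcspin}. Your side remark is also accurate: that identity is a pure double-counting cancellation, $\sum_{i}\sum_{r=0}^{i}a_r=\sum_{r}(m-r)a_r$, together with $\sum_i c_i=0$, and does not actually require the normalization \eqref{spin-constraint}, even though the paper cites it there.
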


\begin{proof}
Using expression \eqref{eq:dzdef1} we get
\begin{equation}
D(z)=\sum_{i=0}^{m-1}\tr(\tilde{Q}_i A(z)z^{m-2}\blambda(z^m-\blambda^m)^{-1})
=\sum_{i=0}^{m-1}\sum_{j=1}^n(\tilde{Q}_i)_{j,j}\lambda_jz^{m-2}\prod_{\substack{\ell=1\\(\ell\neq j)}}^n(z^m-\lambda_\ell^m).
\label{}
\end{equation}
Interchanging the order of summation and the identity
\begin{equation}
\sum_{i=0}^{m-1}\bigg(c_i+\sum_{r=0}^i[\tilde{v}_r \tilde{w}_r]_{j,j}
-\sum_{s=0}^{m-1}\frac{m-s}{m}[\tilde{v}_s\tilde{w}_s]_{j,j}\bigg)=0,
\label{eq:sumcspin}
\end{equation}
which follows from \eqref{spin-constraint}, give
\begin{equation}
\label{eq:dzdef2}
D(z)=\sum_{j=1}^n\phi_jm\lambda_jz^{m-2}\prod_{\substack{\ell=1\\(\ell\neq j)}}^n(z^m-\lambda_\ell^m).
\end{equation}
Substituting $z=\lambda_k$ into this formula yields
\begin{equation}
D(\lambda_k)=\phi_km\lambda_k^{m-1}\prod_{\substack{\ell=1\\(\ell\neq k)}}^n(\lambda_k^m-\lambda_\ell^m)=\phi_kA'(\lambda_k)
\label{}
\end{equation}
and the proof is complete.
\end{proof}

The next statement completes the proof of Theorem \ref{thm:mod} in the spin case.
\begin{proposition}
For a point $[(X,P,v,w)] \in s\mathcal{C}_n^m$ let 
\begin{equation}
s(z)=\frac{C(z)}{|g|A'(z)} \in \C(z)
\end{equation}
with $A(z)$ and $C(z)$ defined in \eqref{eq:A-s} and \eqref{eq:C-s}, respectively. Let us moreover define the variables $\theta_1,\dots,\theta_n$ as
\begin{equation}
\theta_k=s(\lambda_k),\quad k=1,\dots,n.
\label{Sklyaninspin}
\end{equation}
Then the variables $\theta_1,\dots,\theta_n$ given by the generalized Sklyanin's formula \eqref{Sklyaninspin} are conjugate to $\lambda_1,\dots,\lambda_n$.
\end{proposition}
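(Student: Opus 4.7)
The plan is to mirror the argument of Proposition \ref{prop:2}: compute $\theta_k=C(\lambda_k)/(|g|A'(\lambda_k))$ explicitly in the coordinates $(\lambda,\phi,\tilde v,\tilde w)$ on $s\mathcal{C}_n^m$, extract the $\phi_k/m$ summand, and then verify the Poisson bracket identities $\{\lambda_j,\theta_k\}=\delta_{j,k}$ and $\{\theta_j,\theta_k\}=0$ using the bracket \eqref{sPoisson}.

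First I would express $C(z)$ through $(\tilde Q,\tilde L,\tilde v,\tilde w)$ by gauge invariance. The matrix $\tilde v\tilde w$ is now block-diagonal with $(i,i)$-th block $\tilde v_i\tilde w_i$ (rather than only the first block as in Proposition \ref{prop:2}), so \eqref{diag-ii} yields
\[
C(z)=\sum_{i=0}^{m-1}\tr\bigl(\tilde Q_{i-1}A(z)z^{m-2}\blambda(z^m-\blambda^m)^{-1}\tilde v_i\tilde w_i\bigr).
\]
Evaluating at $z=\lambda_k$ kills all terms in the inner sum except those with denominator $z^m-\lambda_k^m$, producing a double sum of $(\tilde Q_{i-1})_{j,k}[\tilde v_i\tilde w_i]_{k,j}$ over $i$ and $j$. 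The spin constraint \eqref{spin-constraint} together with the identity \eqref{eq:sumcspin} then extracts exactly $|g|\phi_k$ from the diagonal ($j=k$) part, giving a decomposition
\[
\theta_k=\frac{\phi_k}{m}+F_k(\lambda,\tilde v,\tilde w),
\]
where $F_k$ depends only on $\lambda$ and the spin variables, collecting the remaining diagonal pieces from \eqref{eq:tilde-Q-j,j} and the off-diagonal contributions from \eqref{eq:tilde-Q-j,k}.

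Since $F_k$ carries no $\phi$-dependence and the $\lambda$-coordinates Poisson-commute with all spin variables, the relation $\{\lambda_j,\theta_k\}=\delta_{j,k}$ is immediate from the $\phi_k/m$ summand via \eqref{sPoisson}. For $\{\theta_j,\theta_k\}=0$, expanding the bracket gives
\[
\{\theta_j,\theta_k\}=\frac{\partial F_j}{\partial\lambda_k}-\frac{\partial F_k}{\partial\lambda_j}+\{F_j,F_k\}_{\mathrm{spin}},
\]
so the task reduces to showing that the $\lambda$-derivative mismatch is exactly cancelled by the pure spin bracket.

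The main obstacle is precisely this cancellation, which has no analogue in the spinless case. The spin bracket $\{F_j,F_k\}_{\mathrm{spin}}$ produces bilinears in $\tilde v,\tilde w$ with coefficients built from products of the rational kernels $\lambda_j^{m-h-1}\lambda_k^h/(\lambda_j^m-\lambda_k^m)$ appearing in \eqref{eq:tilde-Q-j,k}, while $\partial F_k/\partial\lambda_j$ differentiates these same kernels directly. I expect the required identity to follow from the symmetry of the kernels under permutation of indices (the same symmetry already exploited in the closing step of Proposition \ref{prop:2}) combined with the constraint \eqref{spin-constraint}. The chief bookkeeping difficulty is tracking the two summation variables $i$ and $h$ together with the modular shift $i-1-h$ in \eqref{eq:tilde-Q-j,k}, along with the separate treatment of diagonal ($j=k$) and off-diagonal ($j\neq k$) contributions to $F_k$, which feed differently into each side of the desired cancellation.
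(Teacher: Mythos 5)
Your setup coincides with the paper's: the gauge-invariant evaluation of $C(z)$ with the block-diagonal $\tilde v\tilde w$, the evaluation at $z=\lambda_k$ yielding $\theta_k=\frac{1}{m|g|}\sum_{i,j}[\tilde v_i\tilde w_i]_{k,j}(\tilde Q_{i-1})_{j,k}$, the extraction of $\phi_k/m$ from the diagonal part (for which only the constraint \eqref{spin-constraint} is needed, not \eqref{eq:sumcspin} --- the sum over $i$ is now weighted by $[\tilde v_i\tilde w_i]_{k,k}$, which is precisely why the residual diagonal terms survive into $F_k$ instead of cancelling as they do for $D(z)$), and the reduction of $\{\theta_j,\theta_k\}=0$ to the identity $\partial F_j/\partial\lambda_k-\partial F_k/\partial\lambda_j+\{F_j,F_k\}_{\mathrm{spin}}=0$ are all correct. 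But the proof of that identity, which is the entire substance of the paper's argument, is absent: you only state that you \emph{expect} it to follow from kernel symmetry and the constraint.

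Moreover, the mechanism you anticipate --- the $\lambda$-derivative mismatch being cancelled against the pure spin bracket --- is not the one that operates. In the paper's proof $F_k$ is split further as $e_k+f_k$, with $e_k$ the residual diagonal contribution \eqref{eq:ekdef2} (depending on $\lambda_k$ alone among the $\lambda$'s) and $f_k$ the off-diagonal one \eqref{eq:fkdef2}, and the two pieces of your identity vanish \emph{separately}. The derivative mismatch vanishes because $\partial f_j/\partial\lambda_k=\partial f_k/\partial\lambda_j$, established via the derivative symmetry of the kernel $\lambda_k^{m-h-1}\lambda_j^{h}/(\lambda_k^m-\lambda_j^m)$ together with the modular reindexing $h'\equiv m-h-2$, $i'\equiv i-h-1\pmod m$. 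The pure spin bracket vanishes because $\{e_j,f_k\}+\{f_j,e_k\}=-\{f_j,f_k\}$, i.e.\ the sum of \eqref{ejfk+fjek} and \eqref{eq:fjfkfinal} is zero; obtaining these two expressions requires computing $\partial e_j/\partial[\tilde w_i]_{\alpha,j}$ with the help of \eqref{spin-constraint}, expanding $\{f_j,f_k\}$ into the six triple sums $A,\dots,F$ of \eqref{eq:term2}, exploiting the $j\leftrightarrow k$ relations \eqref{relations}, and tracking the boundary terms $\ell=j$ and $\ell=k$ that survive the near-total cancellation in $A-D-F$ and $-B+C+E$. None of this is a routine consequence of the symmetry used in the closing step of Proposition \ref{prop:2}; until these cancellations are carried out, your argument is an outline of the paper's proof rather than a proof.
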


\begin{proof}
A direct calculation shows that
\begin{equation}
C(z)=\sum_{i=0}^{m-1}\sum_{j,t=1}^n[\tilde{v}_i\tilde{w}_i]_{t,j}(\tilde{Q}_{i-1})_{j,t}\lambda_tz^{m-2}\prod_{\substack{\ell=1\\(\ell\neq t)}}^n(z^m-\lambda_\ell^m)
\label{C(z)}
\end{equation}
hence by using \eqref{eq:azpdef1} $\theta_k$ can be expressed as
\begin{equation}
\theta_k=\frac{1}{m|g|}\sum_{i=0}^{m-1}\sum_{j=1}^n[\tilde{v}_i\tilde{w}_i]_{k,j}(\tilde{Q}_{i-1})_{j,k}.
\label{}
\end{equation}
Taking \eqref{eq:tilde-Q-j,j}--\eqref{spin-constraint} into account, the variable $\theta_k$ can be explicitly spelled out as
\begin{equation}
\theta_k=\frac{\phi_k}{m}+e_k(\lambda_k,\tilde{v},\tilde{w})+f_k(\lambda_1,\dots,\lambda_n,\tilde{v},\tilde{w}),
\label{eq:thetaspindef}
\end{equation}
where
\begin{equation}
\label{eq:ekdef2}
e_k(\lambda_k,\tilde{v},\tilde{w})=\frac{1}{m|g|\lambda_k}\sum_{i=0}^{m-1}[\tilde{v}_i\tilde{w}_i]_{k,k}\bigg(c_{i-1}+\sum_{r=0}^{i-1}[\tilde{v}_r\tilde{w}_r]_{k,k}-\sum_{s=0}^{m-1}\frac{m-s}{m}[\tilde{v}_s\tilde{w}_s]_{k,k}\bigg),
\end{equation}
and
\begin{equation}
\label{eq:fkdef2}
f_k(\lambda_1,\dots,\lambda_n,\tilde{v},\tilde{w})=-\frac{1}{m|g|}\sum_{h,i=0}^{m-1}
\sum_{\substack{t=1\\(t\neq k)}}^n[\tilde{v}_i\tilde{w}_i]_{k,t}[\tilde{v}_{i-h-1}\tilde{w}_{i-h-1}]_{t,k}
\frac{\lambda_t^{m-h-1}\lambda_k^h}{\lambda_t^m-\lambda_k^m}.
\end{equation}

From \eqref{sPoisson} and \eqref{eq:thetaspindef} we get
\begin{equation}
\{\lambda_j,\theta_k\}=m\frac{\partial\theta_k}{\partial\phi_j}=m\frac{1}{m}\frac{\partial\phi_k}{\partial\phi_j}=\delta_{j,k}.
\end{equation}
The explicit expression \eqref{eq:thetaspindef} lets us decompose $\{\theta_j,\theta_k\}$ as follows
\begin{multline}
\{\theta_j,\theta_k\}=\frac{1}{m^2}\{\phi_j,\phi_k\}+\frac{1}{m}\{\phi_j,e_k\}+\frac{1}{m}\{e_j,\phi_k\}+\{e_j,e_k\}+\frac{1}{m}\{\phi_j,f_k\}+\frac{1}{m}\{f_j,\phi_k\}\\+\{e_j,f_k\}+\{f_j,e_k\}+\{f_j,f_k\}.
\end{multline}
Since $\phi_j$ and $\phi_k$ Poisson commute and $e_j$ depends only on $\lambda_j$, but not the other $\lambda$'s, each of the first four terms on the right-hand side is zero, that is
\begin{equation}
\{\phi_j,\phi_k\}=0,\quad \{\phi_j,e_k\}=0,\quad \{e_j,\phi_k\}=0,\quad \{e_j,e_k\}=0.
\end{equation}
Hence we are left with
\begin{equation}
\{\theta_j,\theta_k\}=\frac{1}{m}\left(\{\phi_j,f_k\}+\{f_j,\phi_k\}\right)+\{e_j,f_k\}+\{f_j,e_k\}+\{f_j,f_k\},
\end{equation}
where the terms we grouped cancel, because for every $j,k=1,\dots,n$, we have
\begin{equation}
\{f_j,\phi_k\}+\{\phi_j,f_k\}=\frac{\partial f_k}{\partial \lambda_j}-\frac{\partial f_j}{\partial \lambda_k}=0.
\end{equation}
Indeed, we have
\begin{equation}
\frac{\partial f_j}{\partial \lambda_k}=-\frac{1}{m|g|}\sum_{h,i=0}^{m-1}[\tilde{v}_i\tilde{w}_i]_{j,k}[\tilde{v}_{i-h-1}\tilde{w}_{i-h-1}]_{k,j}\frac{\partial}{\partial\lambda_k}\frac{\lambda_k^{m-h-1}\lambda_j^h}{\lambda_k^m-\lambda_j^m}
\label{f_j-lambda-k}
\end{equation}
and a straightforward calculation shows that
\begin{equation}\label{key}
\frac{\partial}{\partial\lambda_k}\frac{\lambda_k^{m-h-1}\lambda_j^h}{\lambda_k^m-\lambda_j^m}
=
\frac{\partial}{\partial\lambda_j}\frac{\lambda_j^{h+1}\lambda_k^{m-h-2}}{\lambda_j^m-\lambda_k^m}.
\end{equation}
Thus
\begin{equation}
\frac{\partial f_j}{\partial \lambda_k}=-\frac{1}{m|g|}\sum_{h,i=0}^{m-1}[\tilde{v}_i\tilde{w}_i]_{j,k}[\tilde{v}_{i-h-1}\tilde{w}_{i-h-1}]_{k,j}\frac{\partial}{\partial\lambda_j}\frac{\lambda_j^{h+1}\lambda_k^{m-h-2}}{\lambda_j^m-\lambda_k^m}.
\end{equation}
Rewriting the sum using a new pair of indices $h',i'$ given by
\begin{equation}
\begin{aligned}
h'&\equiv m-h-2\pmod{m}\\
i'&\equiv i-h-1\pmod{m}
\end{aligned}
\end{equation}
we get
\begin{equation}\label{key}
\frac{\partial f_j}{\partial \lambda_k}=-\frac{1}{m|g|}\sum_{h',i'=0}^{m-1}[\tilde{v}_{i'}\tilde{w}_{i'}]_{k,j}[\tilde{v}_{i'-h'-1}\tilde{w}_{i'-h'-1}]_{j,k}\frac{\partial}{\partial\lambda_j}\frac{\lambda_j^{m-h'-1}\lambda_k^{h'}}{\lambda_j^m-\lambda_k^m},
\end{equation}
which, by an exchange of $j$ and $k$ in \eqref{f_j-lambda-k}, can be seen to coincide with $\partial f_k/\partial\lambda_j$. As a consequence, we now have
\begin{equation}
\{\theta_j,\theta_k\}=\{e_j,f_k\}+\{f_j,e_k\}+\{f_j,f_k\}.
\label{th_j-th_k}
\end{equation}

Let us consider the first term on the right-hand side. Since $e_j$ and $f_k$ do not depend on any of the $\phi$'s and $e_j$ only depends on the $j$-th column (resp. row) of $[\tilde{w}_i]$ (resp. $[\tilde{v}_i]$) we have
\begin{equation}
\{e_j,f_k\}=\sum_{i=0}^{m-1}\sum_{\alpha=1}^d\left(\frac{\partial e_j}{\partial[\tilde{w}_i]_{\alpha,j}}\frac{\partial f_k}{\partial[\tilde{v}_i]_{j,\alpha}}-\frac{\partial e_j}{\partial[\tilde{v}_i]_{j,\alpha}}\frac{\partial f_k}{\partial[\tilde{w}_i]_{\alpha,j} }\right).
\label{e_j-f_k}
\end{equation}

A straightforward computation yields
\begin{multline}
\frac{\partial e_j}{\partial[\tilde{w}_i]_{\alpha,j}}=\frac{1}{m|g|\lambda_j}\bigg[[\tilde{v}_i]_{j,\alpha}\bigg(c_{i-1}+\sum_{r=0}^{i-1}[\tilde{v}_r\tilde{w}_r]_{j,j}-\sum_{s=0}^{m-1}\frac{m-s}{m}[\tilde{v}_s\tilde{w}_s]_{j,j}\bigg)\\
-\frac{m-i}{m}[\tilde{v}_i]_{j,\alpha}\sum_{h=0}^{m-1}[\tilde{v}_h\tilde{w}_h]_{j,j}+[\tilde{v}_i]_{j,\alpha}\sum_{h=i+1}^{m-1}[\tilde{v}_i\tilde{w}_i]_{j,j}\bigg].
\end{multline}
Collecting the common factor $[\tilde{v}_i]_{j,\alpha}$ and applying \eqref{spin-constraint} give us
\begin{equation}
\frac{\partial e_j}{\partial[\tilde{w}_i]_{\alpha,j}}=\frac{[\tilde{v}_i]_{j,\alpha}}{m|g|\lambda_j}\bigg[c_{i-1}+\frac{i}{m}|g|-[\tilde{v}_i\tilde{w}_i]_{j,j}-\sum_{s=0}^{m-1}\frac{m-s}{m}[\tilde{v}_s\tilde{w}_s]_{j,j}\bigg].
\label{e_j-w}
\end{equation}
Similarly,
\begin{equation}
\frac{\partial e_j}{\partial [\tilde{v}_i]_{j,\alpha}}=\frac{[\tilde{w}_i]_{\alpha,j}}{m|g|\lambda_j}\bigg[c_{i-1}+\frac{i}{m}|g|-[\tilde{v}_i\tilde{w}_i]_{j,j}-\sum_{s=0}^{m-1}\frac{m-s}{m}[\tilde{v}_s\tilde{w}_s]_{j,j}\bigg].
\label{e_j-v}
\end{equation}
As for the partial derivatives of $f_k$, we have
\begin{equation}
\frac{\partial f_k}{\partial[\tilde{w}_i]_{\alpha,j}}=-\frac{[\tilde{v}_i]_{k,\alpha}}{m|g|}\sum_{h=0}^{m-1}[\tilde{v}_{i-h-1}\tilde{w}_{i-h-1}]_{j,k}\frac{\lambda_j^{m-h-1}\lambda_k^h}{\lambda_j^m-\lambda_k^m}
\label{f_k-w}
\end{equation}
and
\begin{equation}
\frac{\partial f_k}{\partial[\tilde{v}_i]_{j,\alpha}}=-\frac{[\tilde{w}_i]_{\alpha,k}}{m|g|}\sum_{h=0}^{m-1}[\tilde{v}_{i+h+1}\tilde{w}_{i+h+1}]_{k,j}\frac{\lambda_j^{m-h-1}\lambda_k^h}{\lambda_j^m-\lambda_k^m}.
\label{f_k-v}
\end{equation}
Putting formulas \eqref{e_j-w}--\eqref{f_k-v} together, $\{e_j,f_k\}$ \eqref{e_j-f_k} is found to be
\begin{multline}
\{e_j,f_k\}=\frac{1}{(m|g|)^2}\sum_{h,i=0}^{m-1}\big([\tilde{v}_i\tilde{w}_i]_{k,j}[\tilde{v}_{i-h-1}\tilde{w}_{i-h-1}]_{j,k}-[\tilde{v}_i\tilde{w}_i]_{j,k}[\tilde{v}_{i+h+1}\tilde{w}_{i+h+1}]_{k,j}\big)\times\\
\times\bigg(c_{i-1}+\frac{i}{m}|g|-[\tilde{v}_i\tilde{w}_i]_{j,j}-\sum_{s=0}^{m-1}\frac{m-s}{m}[\tilde{v}_s\tilde{w}_s]_{j,j}\bigg)\frac{\lambda_j^{m-h-2}\lambda_k^h}{\lambda_j^m-\lambda_k^m}.
\label{e_j-f_k-1}
\end{multline}
The Poisson bracket $\{f_j,e_k\}$ is obtained from $\{e_j,f_k\}$ \eqref{e_j-f_k-1} by changing its sign and exchanging $j$ and $k$. Hence we get
\begin{multline}
\{f_j,e_k\}=-\frac{1}{(m|g|)^2}\sum_{h',i=0}^{m-1}\big([\tilde{v}_i\tilde{w}_i]_{k,j}[\tilde{v}_{i+h'+1}\tilde{w}_{i+h'+1}]_{j,k}-[\tilde{v}_i\tilde{w}_i]_{j,k}[\tilde{v}_{i-h'+1}\tilde{w}_{i-h'+1}]_{k,j}\big)\times\\
\times\bigg(c_{i-1}+\frac{i}{m}|g|-[\tilde{v}_i\tilde{w}_i]_{k,k}-\sum_{s=0}^{m-1}\frac{m-s}{m}[\tilde{v}_s\tilde{w}_s]_{k,k}\bigg)\frac{\lambda_k^{m-h'-2}\lambda_j^{h'}}{\lambda_j^m-\lambda_k^m}.
\label{f_j-e_k-1}
\end{multline}
Rewriting this using the new index $h\equiv m-h'-2\pmod{m}$ allows us to collect the factors of the terms with the same $\lambda$ dependence in $\{e_j,f_k\}$ and $\{f_j,e_k\}$. Then we add \eqref{e_j-f_k-1} and \eqref{f_j-e_k-1} together and find that the terms with $c_i$ and $\frac{i}{m}|g|$ cancel and as a result, we get
\begin{multline}
\{e_j,f_k\}+\{f_j,e_k\}=\frac{1}{(m|g|)^2\lambda_j\lambda_k}\sum_{h=1}^{m-1}\sum_{i=0}^{m-1}
([\tilde{v}_i\tilde{w}_i]_{k,k}-[\tilde{v}_i\tilde{w}_i]_{j,j})\times\\
\times([\tilde{v}_i\tilde{w}_i]_{k,j}[\tilde{v}_{i-h}\tilde{w}_{i-h}]_{j,k}-[\tilde{v}_i\tilde{w}_i]_{j,k}[\tilde{v}_{i+h}\tilde{w}_{i+h}]_{k,j})\frac{\lambda_j^{m-h}\lambda_k^h}{\lambda_j^m-\lambda_k^m}.
\label{ejfk+fjek}
\end{multline}

Let us now consider the last term $\{f_j,f_k\}$ in $\{\theta_j,\theta_k\}$ \eqref{th_j-th_k}. Since $f_j$ and $f_k$ do not depend on any of the $\phi$'s we have
\begin{equation}
\{f_j,f_k\}=\sum_{\ell=1}^n\sum_{i=0}^{m-1}\sum_{\alpha=1}^d\left(\frac{\partial f_j}{\partial[\tilde{w}_i]_{\alpha,\ell}}\frac{\partial f_k}{\partial[\tilde{v}_i]_{\ell,\alpha}}-\frac{\partial f_j}{\partial[\tilde{v}_i]_{\ell,\alpha}}\frac{\partial f_k}{\partial[\tilde{w}_i]_{\alpha,\ell}}\right).
\label{f_j-f_k}
\end{equation}
We already calculated most of these partial derivatives in \eqref{f_k-w} and \eqref{f_k-v}. The only ones remaining are
\begin{equation}
\frac{\partial f_j}{\partial[\tilde{w}_i]_{\alpha,j}}=-\frac{1}{m|g|}\sum_{\substack{t=1\\(t\neq j)}}^n[\tilde{v}_i]_{t,\alpha}\sum_{h=0}^{m-1}[\tilde{v}_{i+h+1}\tilde{w}_{i+h+1}]_{j,t}\frac{\lambda_t^{m-h-1}\lambda_j^h}{\lambda_t^m-\lambda_j^m}
\end{equation}
and
\begin{equation}
\frac{\partial f_j}{\partial[\tilde{v}_i]_{j,\alpha}}=-\frac{1}{m|g|}\sum_{\substack{t=1\\(t\neq j)}}^n[\tilde{w}_i]_{\alpha,t}\sum_{h=0}^{m-1}[\tilde{v}_{i-h-1}\tilde{w}_{i-h-1}]_{t,j}\frac{\lambda_t^{m-h-1}\lambda_j^h}{\lambda_t^m-\lambda_j^m}.
\end{equation}
for $\alpha=1,\dots,d$. Now we break up the sum \eqref{f_j-f_k} into six parts, namely
\begin{multline}
\{f_j,f_k\}=\underbrace{\sum_{\substack{\ell=1\\(\ell\neq j,k)}}^n\sum_{i=0}^{m-1}\sum_{\alpha=1}^d\frac{\partial f_j}{\partial[\tilde{w}_i]_{\alpha,\ell}}\frac{\partial f_k}{\partial[\tilde{v}_i]_{\ell,\alpha}}}_{=:A}
-\underbrace{\sum_{\substack{\ell=1\\(\ell\neq j,k)}}^n\sum_{i=0}^{m-1}\sum_{\alpha=1}^d\frac{\partial f_j}{\partial[\tilde{v}_i]_{\ell,\alpha}}\frac{\partial f_k}{\partial[\tilde{w}_i]_{\alpha,\ell}}}_{=:B}\\
+\underbrace{\sum_{i=0}^{m-1}\sum_{\alpha=1}^d\frac{\partial f_j}{\partial[\tilde{w}_i]_{\alpha,j}}\frac{\partial f_k}{\partial[\tilde{v}_i]_{j,\alpha}}}_{=:C}
-\underbrace{\sum_{i=0}^{m-1}\sum_{\alpha=1}^d\frac{\partial f_j}{\partial[\tilde{v}_i]_{j,\alpha}}\frac{\partial f_k}{\partial[\tilde{w}_i]_{\alpha,j}}}_{=:D}\\
+\underbrace{\sum_{i=0}^{m-1}\sum_{\alpha=1}^d\frac{\partial f_j}{\partial[\tilde{w}_i]_{\alpha,k}}\frac{\partial f_k}{\partial[\tilde{v}_i]_{k,\alpha}}}_{=:E}
-\underbrace{\sum_{i=0}^{m-1}\sum_{\alpha=1}^d\frac{\partial f_j}{\partial[\tilde{v}_i]_{k,\alpha}}\frac{\partial f_k}{\partial[\tilde{w}_i]_{\alpha,k}}}_{=:F}.
\label{eq:term2}
\end{multline}
Fortunately, these expressions are related. For example, we get $B$ if we exchange $j$ and $k$ in $A$. We denote this by writing that $B=(A)_{j\leftrightarrow k}$. There are similar relations between the expressions $C$ and $F$, as well as between the expressions $D$ and $E$. In short, we have
\begin{equation}
B=(A)_{j\leftrightarrow k},\quad
C=(F)_{j\leftrightarrow k},\quad
D=(E)_{j\leftrightarrow k}.
\label{relations}
\end{equation}
This observation saves us half the work as we only need to calculate, say $A$, $F$, and $E$.
First, we calculate $A$ and find that
\begin{equation}
A=\frac{1}{(m|g|)^2}\sum_{\substack{\ell=1\\(\ell\neq j,k)}}^n\sum_{i,h,h'=0}^{m-1}[\tilde{v}_i\tilde{w}_i]_{j,k}[\tilde{v}_{i-h-1}\tilde{w}_{i-h-1}]_{\ell,j}[\tilde{v}_{i+h'+1}\tilde{w}_{i+h'+1}]_{k,\ell}\frac{\lambda_\ell^{2m-h-h'-2}\lambda_j^h\lambda_k^{h'}}{(\lambda_\ell^m-\lambda_j^m)(\lambda_\ell^m-\lambda_k^m)}.
\label{A}
\end{equation}
Second, we calculate $F$ and get
\begin{equation}
F=\frac{1}{(m|g|)^2}\sum_{\substack{\ell=1\\(\ell\neq k)}}^n\sum_{i,h,h'=0}^{m-1}[\tilde{v}_i\tilde{w}_i]_{\ell,j}[\tilde{v}_{i+h+1}\tilde{w}_{i+h+1}]_{k,\ell}[\tilde{v}_{i+h'+1}\tilde{w}_{i+h'+1}]_{j,k}\frac{\lambda_\ell^{m-h-1}\lambda_k^{m-h'-1+h}\lambda_j^{h'}}{(\lambda_\ell^m-\lambda_k^m)(\lambda_k^m-\lambda_j^m)}.
\label{F}
\end{equation}
Third, we calculate $E$ and get
\begin{equation}
E=\frac{1}{(m|g|)^2}\sum_{\substack{\ell=1\\(\ell\neq k)}}^n\sum_{i,h,h'=0}^{m-1}[\tilde{v}_i\tilde{w}_i]_{j,\ell}[\tilde{v}_{i-h-1}\tilde{w}_{i-h-1}]_{\ell,k}[\tilde{v}_{i-h'-1}\tilde{w}_{i-h'-1}]_{k,j}\frac{\lambda_\ell^{m-h-1}\lambda_k^{m-h'-1+h}\lambda_j^{h'}}{(\lambda_\ell^m-\lambda_k^m)(\lambda_k^m-\lambda_j^m)}.
\label{E}
\end{equation}
We obtain explicit formulas for $B$, $C$, and $D$ from \eqref{A}--\eqref{E} and the relations \eqref{relations}. Namely,
\begin{equation}
B=\frac{1}{(m|g|)^2}\sum_{\substack{\ell=1\\(\ell\neq j,k)}}^n\sum_{i,h,h'=0}^{m-1}[\tilde{v}_i\tilde{w}_i]_{k,j}[\tilde{v}_{i-h'-1}\tilde{w}_{i-h'-1}]_{\ell,k}[\tilde{v}_{i+h+1}\tilde{w}_{i+h+1}]_{j,\ell}\frac{\lambda_\ell^{2m-h-h'-2}\lambda_j^h\lambda_k^{h'}}{(\lambda_\ell^m-\lambda_j^m)(\lambda_\ell^m-\lambda_k^m)},
\label{B}
\end{equation}
\begin{equation}
C=\frac{1}{(m|g|)^2}\sum_{\substack{\ell=1\\(\ell\neq j)}}^n\sum_{i,h,h'=0}^{m-1}[\tilde{v}_i\tilde{w}_i]_{\ell,k}[\tilde{v}_{i+h+1}\tilde{w}_{i+h+1}]_{j,\ell}[\tilde{v}_{i+h'+1}\tilde{w}_{i+h'+1}]_{k,j}\frac{\lambda_\ell^{m-h-1}\lambda_j^{m-h'-1+h}\lambda_k^{h'}}{(\lambda_\ell^m-\lambda_j^m)(\lambda_j^m-\lambda_k^m)},
\label{C}
\end{equation}
and
\begin{equation}
D=\frac{1}{(m|g|)^2}\sum_{\substack{\ell=1\\(\ell\neq j)}}^n\sum_{i,h,h'=0}^{m-1}[\tilde{v}_i\tilde{w}_i]_{k,\ell}[\tilde{v}_{i-h-1}\tilde{w}_{i-h-1}]_{\ell,j}[\tilde{v}_{i-h'-1}\tilde{w}_{i-h'-1}]_{j,k}\frac{\lambda_\ell^{m-h-1}\lambda_j^{m-h'-1+h}\lambda_k^{h'}}{(\lambda_\ell^m-\lambda_j^m)(\lambda_j^m-\lambda_k^m)}.
\label{D}
\end{equation}

By a suitable change of indices in $D$ and $F$ we see that in $A-D-F$ almost all terms cancel. The only ones remaining are the terms with $\ell=k$ in $D$ and the terms with $\ell=j$ in $F$. As a consequence, we get
\begin{multline}
A-D-F=\frac{1}{(m|g|)^2}\sum_{i,h,h'=0}^{m-1}\big([\tilde{v}_i\tilde{w}_i]_{k,k}[\tilde{v}_{i-h-1}\tilde{w}_{i-h-1}]_{k,j}[\tilde{v}_{i-h'-1}\tilde{w}_{i-h'-1}]_{j,k}+\\+[\tilde{v}_i\tilde{w}_i]_{j,j}[\tilde{v}_{i+h'+1}\tilde{w}_{i+h'+1}]_{k,j}[\tilde{v}_{i+h+1}\tilde{w}_{i+h+1}]_{j,k}\big)\frac{\lambda_j^{m-h'-1+h}\lambda_k^{m-h-1+h'}}{(\lambda_j^m-\lambda_k^m)^2}.
\label{A-D-F}
\end{multline}
With the same type of computation we obtain
\begin{multline}
-B+C+E=-\frac{1}{(m|g|)^2}\sum_{i,h,h'=0}^{m-1}\big([\tilde{v}_i\tilde{w}_i]_{k,k}[\tilde{v}_{i+h'+1}\tilde{w}_{i+h'+1}]_{k,j}[\tilde{v}_{i+h+1}\tilde{w}_{i+h+1}]_{j,k}+\\+[\tilde{v}_i\tilde{w}_i]_{j,j}[\tilde{v}_{i-h-1}\tilde{w}_{i-h-1}]_{k,j}[\tilde{v}_{i-h'-1}\tilde{w}_{i-h'-1}]_{j,k}\big)\frac{\lambda_j^{m-h'-1+h}\lambda_k^{m-h-1+h'}}{(\lambda_j^m-\lambda_k^m)^2}.
\label{-B+C+E}
\end{multline}
Since the exponents of $\lambda_j$ and $\lambda_k$ do not depend on $i$ and depend only on the difference of $h$ and $h'$, but not on the individual indices, introducing a new index $h'':=h-h'$ and adding \eqref{A-D-F} to \eqref{-B+C+E}, yields an explicit formula for the Poisson bracket $\{f_j,f_k\}$. Namely, we get
\begin{multline}
\{f_j,f_k\}=-\frac{1}{(m|g|)^2\lambda_j\lambda_k}\sum_{h''=1}^{m-1}\sum_{i=0}^{m-1}
([\tilde{v}_i\tilde{w}_i]_{k,k}-[\tilde{v}_i\tilde{w}_i]_{j,j})\times\\
\times([\tilde{v}_i\tilde{w}_i]_{k,j}[\tilde{v}_{i-h''}\tilde{w}_{i-h''}]_{j,k}-[\tilde{v}_i\tilde{w}_i]_{j,k}[\tilde{v}_{i+h''}\tilde{w}_{i+h''}]_{k,j})\frac{\lambda_j^{m-h''}\lambda_k^{h''}}{\lambda_j^m-\lambda_k^m}.
\label{eq:fjfkfinal}
\end{multline}

This is the same expression as \eqref{ejfk+fjek} only with opposite sign. Hence these two terms in $\{\theta_j,\theta_k\}$ cancel and we obtain
\begin{equation}
\{\theta_j,\theta_k\}=0.
\end{equation}
Finally, let us observe that due to \eqref{spin-constraint} we can take any fixed $h\in\Z_m$ and $\beta\in\{1,\dots,d\}$ and express $[\tilde{v}_h]_{j,\beta}$ in terms of $[\tilde{v}_i]_{j,\alpha}$ and $[\tilde{w}_{i'}]_{\alpha',j}$ with $i,i'\in\Z_m$ ($i\neq h$) and $\alpha,\alpha'\in\{1,\dots,d\}$ ($\alpha\neq\beta$) for all $j=1,\dots,n$. This means that $[\tilde{v}_h]_{j,\beta}$ ($j=1,\dots,n$) are not independent coordinates on $s\cC_m^n$, i.e.
\begin{equation}
\{\theta_k,[\tilde{v}_h]_{j,\beta}\}=0,\quad \{\theta_k,[\tilde{w}_h]_{\beta,j}\}=0
\end{equation}
for all $j,k=1,\dots,n$.
\end{proof}

\begin{remark}
Let us list some important special cases of our results. In \cite{CS17}, it was shown that the $m=2$, $d=1$ case corresponds to the rational Calogero-Moser system of type $B_n$ (and with $g_1=0$ of type $D_n$). Setting $m=1$, $d>1$ produces the Gibbons-Hermsen system \cite{GH84}, whereas the $m=2$, $d>1$ case contains the type $B_n$ variant of the Gibbons-Hermsen system.
\end{remark}

\section{The equivariant geometry of the interpolation curves}
\label{sec:interpolation}

Now we briefly describe the geometry of the interpolation curves appearing in Theorems \ref{thm:rel} and \ref{thm:mod}. These are the affine plane curves
\begin{equation}
C_1=\{ (z,r(z))\colon z \in \C \} \subset \C^2
\quad\text{and}\quad
C_2=\{ (z,s(z))\colon z \in \C\} \subset \C^2.
\end{equation}
Both of these are rationally parametrized. Hence, they can be completed to rational curves in $\C\P^2$. The expressions \eqref{eq:azpdef1}, \eqref{C(z)-1}, \eqref{C(z)}, \eqref{D(z)} and \eqref{eq:dzdef2} show that the polynomials $A'(z)$, $C(z)$ and $D(z)$ in all cases are divisible by $z^{m-2}$. After cancellations we can write
\begin{equation}
r(z)=:\frac{p_1(z^m)}{z q(z^m)}\quad\text{and}\quad s(z)=:\frac{p_2(z^m)}{z q(z^m)},
\end{equation}
where $p_1(z)$, $p_2(z)$ and $q(z)$ are polynomials of degree $n-1$.
The defining equation of the curve $C_\delta$ is
\begin{equation}
q(z^m)zy-p_\delta(z^m)=0,\quad \delta=1,2.
\end{equation}

Let $\Delta$ be the root system $A_{m-1}$ and let us choose a primitive $m$-th root of unity $\omega$. 
There corresponds to $\Delta$ a subgroup $G_\Delta$ of $\SL(2,\C)$, a cyclic subgroup of order $m$, 
which is generated by the matrix
\begin{equation}
\sigma=
\begin{pmatrix}
\omega&0\\0&\omega^{-1}
\end{pmatrix}.
\end{equation}
All irreducible representations of $G_\Delta$ are one-dimensional, and are given by $\rho_j\colon \sigma\mapsto \omega^j$, for $j\in\Z_m$. The corresponding McKay quiver is the cyclic Dynkin diagram of type $\widetilde A_{m-1}^{(1)}$. The group $G_\Delta$ acts on $\C^2$; the quotient variety $\C^2/G_\Delta$ has an isolated singularity of type $A_{m-1}$ at the origin. In coordinates, the ring of functions $H^0(\mathcal{O}_{\C^2/G_\Delta})=\mathbb{C}[y,z]^{G_\Delta}$ is generated by $a=z^m$, $b=y^m$ and $c=zy$ which satisfy the relation
\begin{equation}
\label{eq:amsing}
ab=c^m.
\end{equation}

As it was remarked in \cite[Section 5.1]{CS17} the set of eigenvalues $(\lambda_1^m,\dots,\lambda_n^m)$ of the transformation $P_{0}P_1\dots P_{m-1} \in \mathrm{End}(V_0)$  determines $(\lambda_1,\dots,\lambda_n)$ only up to permutations and multiplication by $m$-th root of unity. Therefore, the coordinates $\lambda_i, \phi_i$ are only well-defined up to the action of $S_n \ltimes G_\Delta$, where the $S_n$ component permutes $\{\lambda_i\}$ and  $\{\phi_i\}$ simultaneously, and the generator of the $G_\Delta$ component maps $(\lambda_i, \phi_i)$ to $(\omega \lambda_i, \omega^{-1}\phi_i)$. 

The following lemma is straightforward from \eqref{15} and \eqref{eq:thetaspindef}.
\begin{lemma}
When $\lambda_i$ is replaced by $\omega \lambda_i$, then $\theta_i$ is replaced by $\omega^{-1} \theta_i$. Therefore, the coordinates $\lambda_i$, $\theta_i$ are also  well-defined only up to the action of $S_n \ltimes G_\Delta$.
\end{lemma}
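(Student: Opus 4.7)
The plan is to verify the transformation law directly from the closed-form expressions for $\theta_k$ already obtained in the preceding propositions, namely \eqref{15} together with \eqref{eq:fkdef1} in the spinless case and \eqref{eq:thetaspindef} together with \eqref{eq:ekdef2}--\eqref{eq:fkdef2} in the spin case. The single algebraic input needed is $\omega^m=1$, which makes every occurrence of $\lambda_i^m$ (and hence every denominator of the form $\lambda_t^m-\lambda_k^m$) invariant under $\lambda_i\mapsto\omega\lambda_i$; only the ``unbalanced'' powers of the $\lambda$'s surviving after the cancellations carried out in those proofs can pick up a nontrivial weight. This immediately narrows the task to a short weight count.

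For $\cC_n^m$ I would argue as follows. The term $\phi_k/m$ in \eqref{15} transforms as $\phi_k/m\mapsto\omega^{-1}\phi_k/m$ by the $G_\Delta$-action on canonical coordinates recalled just before the lemma. Inspecting \eqref{eq:fkdef1}, every summand of $f_k$ has the shape $\frac{1}{\lambda_k}\cdot R(\lambda_1^m,\dots,\lambda_n^m)$ for some rational function $R$, so it acquires the single factor $\omega^{-1}$ coming from the $1/\lambda_k$ prefactor; hence $f_k\mapsto\omega^{-1}f_k$ and therefore $\theta_k\mapsto\omega^{-1}\theta_k$. For $s\cC_n^m$ the same mechanism works: the function $e_k$ in \eqref{eq:ekdef2} is a $1/\lambda_k$ prefactor times a pure framing expression and so scales by $\omega^{-1}$; in \eqref{eq:fkdef2} the only $\lambda$-dependent factor of each summand is $\lambda_t^{m-h-1}\lambda_k^h/(\lambda_t^m-\lambda_k^m)$, whose numerator picks up the weight $\omega^{(m-h-1)+h}=\omega^{m-1}=\omega^{-1}$ while its denominator is invariant, yielding $f_k\mapsto\omega^{-1}f_k$ once more.

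For the $S_n$ part, in both \eqref{eq:fkdef1} and \eqref{eq:fkdef2} the dependence of $\theta_k$ on the indices $\ell\neq k$ is symmetric, and the framing data $(\tilde v,\tilde w)$ are labelled consistently with $(\lambda,\phi)$; consequently a simultaneous permutation of $\{(\lambda_i,\phi_i)\}$ (together with the induced permutation of the framing rows and columns) permutes the $\theta_i$ by the same element of $S_n$. Combining this with the scaling computation above gives the stated well-definedness of $(\lambda_i,\theta_i)$ up to $S_n\ltimes G_\Delta$. The only step that requires any bookkeeping is the exponent count in the spin case, which reduces to the elementary congruence $(m-h-1)+h\equiv -1\pmod m$; everything else is a direct inspection.
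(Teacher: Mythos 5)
Your proof is correct and is precisely the verification the paper has in mind: the paper simply declares the lemma ``straightforward from \eqref{15} and \eqref{eq:thetaspindef}'', and your weight count (using $\omega^m=1$ to kill all $\lambda^m$-dependence and isolating the residual $\omega^{-1}$ from the $1/\lambda_k$ prefactor, resp.\ from the exponent congruence $(m-h-1)+h\equiv-1\pmod m$) is exactly that straightforward check, carried out in more detail than the paper itself provides.
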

\begin{corollary} 
The pairs of variables $(\lambda_i,\phi_i)$ and $(\lambda_i,\theta_i)$ are well-defined on $\C^2/G_\Delta$. 
\end{corollary}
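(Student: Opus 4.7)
The proof is an essentially formal consequence of the preceding lemma together with the paragraph immediately before it describing the residual ambiguity in $(\lambda_i,\phi_i)$. My plan is to first combine these two observations: after quotienting out the $S_n$-component that simultaneously permutes the pairs, both $(\lambda_i,\phi_i)$ and $(\lambda_i,\theta_i)$ carry exactly the same residual $G_\Delta$-ambiguity, namely the rescaling $(\lambda_i,\psi_i)\mapsto(\omega\lambda_i,\omega^{-1}\psi_i)$ for $\psi_i\in\{\phi_i,\theta_i\}$. The first rule is given to us; the second is exactly the content of the lemma.

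The second step is to identify this rescaling with the defining linear action of the generator $\sigma=\mathrm{diag}(\omega,\omega^{-1})$ of $G_\Delta$ on $\C^2$ with coordinates $(z,y)$. Once this identification is made, the pair $(\lambda_i,\phi_i)$, viewed as a point of $\C^2$, is only defined up to its $G_\Delta$-orbit and therefore descends to a unique point on the quotient $\C^2/G_\Delta$; the same argument applies verbatim to $(\lambda_i,\theta_i)$. Equivalently, one can spell the descent out in terms of invariants: the three basic elements $a=\lambda_i^m$, $b=\psi_i^m$, and $c=\lambda_i\psi_i$ are manifestly well-defined functions on the Calogero-Moser space and satisfy the defining relation $ab=c^m$ of \eqref{eq:amsing}, so each pair produces an honest point on the $A_{m-1}$-singular surface.

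There is really no substantive obstacle here; everything reduces to matching the two transformation rules to the standard $G_\Delta$-action on $\C^2$. The only care needed is to track that the sign of the weight in the $\phi$- and $\theta$-directions is $-1$ (so that the action is by $\sigma$ rather than by some other element of $\mathrm{GL}(2,\C)$), which is precisely what the lemma and the preceding discussion guarantee. This is why the statement is presented as a corollary rather than requiring a separate computation.
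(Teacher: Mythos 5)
Your argument is correct and is essentially the paper's own (implicit) reasoning: the corollary is stated without proof precisely because it follows immediately from the preceding lemma and the description of the residual $S_n\ltimes G_\Delta$-ambiguity, by matching the rescaling $(\lambda_i,\psi_i)\mapsto(\omega\lambda_i,\omega^{-1}\psi_i)$ with the action of the generator $\sigma=\diag(\omega,\omega^{-1})$ on $\C^2$. Your additional remark that the invariants $a=\lambda_i^m$, $b=\psi_i^m$, $c=\lambda_i\psi_i$ give an explicit point on the surface $ab=c^m$ is a nice concrete check, consistent with how the paper uses these generators in the subsequent corollary.
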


As a result, the curves $C_1$ and $C_2$ are only well-defined up to the action of $G_\Delta$. But they descend to well-defined curves on the quotient space $\C^2/G_\Delta$.
\begin{corollary}
The curves $C_1$ and $C_2$ descend to well-defined rational curves $C_1/G_\Delta$ and $C_2/G_\Delta$ on $\C^2/G_\Delta$. When considered as a subvariety of $\mathbb{C}^3=\mathrm{Spec}(\mathbb{C}[a,b,c])$, $C_1/G_\Delta$ and $C_2/G_\Delta$ are given by the intersection of the surface \eqref{eq:amsing} and the surface
\begin{equation}
q(a)c-p_\delta(a)=0,\quad \delta=1,2,
\end{equation}
respectively, or equivalently, the surface swept out by the translations of the graph of the degree $n-1$ interpolating function
\begin{equation}
c=\frac{p_\delta(a)}{q(a)},\quad \delta=1,2
\end{equation}
in the $b$-direction. In this way we obtain a map
\begin{equation}
\mathcal{C}^m_n \to \mathrm{RatCurves}^{n}(\C^2/G_\Delta)
\end{equation}
defined on the dense open subset $\mathcal{C}_n^{m,P} \subset \mathcal{C}_n^m$, where $\mathrm{RatCurves}^{n}(\C^2/G_\Delta)$ is the space of rational curves of degree $n$ on $\C^2/G_\Delta$.

Conversely, if $C \subset \C^2/G_\Delta$ is a rational curve of degree $n$ which is of the above form, then any distinct $n$ points on it determine a point of $\mathcal{C}_n^{m,P}$, such that the associated curve $C_\delta/G_\Delta$ (resp. $C_2/G_\Delta$) to this point is $C$. This correspondence associates the point of $\cC_n^m$ with coordinates $\{(\lambda_i, \phi_i)\}$ (resp. $\{(\lambda_i, \theta_i)\}$) to the $n$ points $\{(\lambda_i, \phi_i)\} \subset C$ (resp. $\{(\lambda_i, \theta_i)\} \subset C$).
\end{corollary}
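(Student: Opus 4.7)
The plan is to exploit the simple fact that $a=z^m$, $b=y^m$, $c=zy$ generate the ring of $G_\Delta$-invariants on $\C^2$ and satisfy the single relation $ab=c^m$. The defining polynomial $q(z^m)\,zy - p_\delta(z^m)$ of $C_\delta$ is built from $z^m$ and $zy$ alone, so it is manifestly $G_\Delta$-invariant, and under the substitution it becomes $q(a)c - p_\delta(a)$. This immediately shows that $C_\delta$ descends to a well-defined closed subset of $\C^2/G_\Delta$, and that, inside $\mathrm{Spec}\,\C[a,b,c]=\C^3$, the image $C_\delta/G_\Delta$ is contained in the intersection of the $A_{m-1}$-surface $ab=c^m$ with the hypersurface $q(a)c-p_\delta(a)=0$. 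For the reverse inclusion, given $(a,b,c)$ satisfying both equations one chooses any $m$-th root $z$ of $a$ and sets $y=c/z$ (the relation $ab=c^m$ then forces $y^m=b$); the pair $(z,y)$ lies on $C_\delta$, proving that the intersection equals $C_\delta/G_\Delta$ set-theoretically.

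Rationality of $C_\delta/G_\Delta$ follows from the explicit parametrization $a\mapsto\bigl(a,\ p_\delta(a)^m/\bigl(a\,q(a)^m\bigr),\ p_\delta(a)/q(a)\bigr)$, which is the push-forward of the rational parametrization $z\mapsto(z,r(z))$ or $z\mapsto(z,s(z))$ of $C_\delta$ itself. To obtain the map $\cC_n^m\to\mathrm{RatCurves}^n(\C^2/G_\Delta)$, I would take a point $[(X,P,v,w)]\in\cC_n^{m,P}$ and form the polynomials $A'(z)$, $C(z)$, $D(z)$ as before; the resulting $r(z)$ and $s(z)$ depend only on the class (since $A$, $C$, $D$ do), and the previous lemma together with the identification $c=zy$, $a=z^m$ guarantees that the curve $C_\delta/G_\Delta$ is well-defined. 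The degree is $n$ because the projection of $C_\delta/G_\Delta$ to the $a$-coordinate is generically $n$-to-one (the preimage of a generic value of $a$ meets the $n$ points $\{(\lambda_i^m,\phi_i^m\text{ or }\theta_i^m,\lambda_i\phi_i\text{ or }\lambda_i\theta_i)\}$ through suitable shifts), or equivalently because $p_\delta$ and $q$ are of degree $n-1$ while the $b$-fibre contributes one more.

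For the converse, given a rational curve $C\subset\C^2/G_\Delta$ of the stated form, the data $(p_\delta,q)$ are recovered from its defining equation up to simultaneous scaling, and any choice of $n$ distinct points $\{(a_i,b_i,c_i)\}$ on $C$ determines $\lambda_i$ by setting $\lambda_i^m=a_i$ (with a choice of $m$-th root, well-defined up to $G_\Delta$) and then $\phi_i=c_i/\lambda_i$ (respectively $\theta_i=c_i/\lambda_i$). The interpolation identity $\phi_i=r(\lambda_i)$ of Proposition \ref{prop:1} (resp.\ $\theta_i=s(\lambda_i)$ of Proposition \ref{prop:2}) is then automatic from membership on the curve, and the resulting $2n$-tuple specifies a point of $\cC_n^{m,P}$ via the local coordinate description of Section \ref{sec:quiver}. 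The ambiguity by $S_n\ltimes G_\Delta$ drops out precisely because we quotient the $n$ points by reordering and the curve by $G_\Delta$, making the correspondence well-defined.

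The main obstacle I anticipate is the converse direction: one must verify that an arbitrary $n$-tuple of distinct points on a curve of the prescribed form really does yield a valid point of $\cC_n^{m,P}$, i.e.\ that the reconstructed $(Q,L,v,w)$ satisfies the moment map equation \eqref{constraint} with an invertible and diagonalisable $P$. The distinctness of the $\lambda_i^m$ (which is what ``distinct $n$ points'' translates into after choosing $m$-th roots) secures diagonalisability and noncoincidence of eigenvalues, while the prescribed form of $v$, $w$ in the local coordinates of Section \ref{sec:quiver} gives \eqref{constraint} by direct substitution; nevertheless the bookkeeping of how the $G_\Delta$-ambiguity in extracting $\lambda_i$ from $\lambda_i^m$ interacts with the $S_n$-ambiguity requires care, and this is where I would focus the detailed argument.
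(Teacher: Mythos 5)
The paper states this corollary without proof, and your reconstruction of the forward direction --- the $G_\Delta$-invariance of $q(z^m)zy-p_\delta(z^m)$, the descent to the hypersurface $q(a)c-p_\delta(a)=0$ inside $ab=c^m$, and the explicit rational parametrization by $a$ --- is exactly the argument the authors intend. Two points need repair, however. First, your justification of the degree contradicts your own parametrization: since $a\mapsto\bigl(a,\ p_\delta(a)^m/(a\,q(a)^m),\ p_\delta(a)/q(a)\bigr)$ parametrizes the curve by $a$, the projection of $C_\delta/G_\Delta$ to the $a$-line is generically one-to-one, not $n$-to-one; whatever notion of degree is meant by $\mathrm{RatCurves}^{n}$ must come from the degrees of $p_\delta$ and $q$ or from the $n$ marked points, not from that projection.

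Second, and more seriously, in the converse you have located the difficulty in the wrong place. The moment-map equation \eqref{constraint} is not the issue: once the chosen points have distinct nonzero $a$-coordinates, the Chalykh--Silantyev normal form $(\tilde Q,\tilde L,\tilde v,\tilde w)$ of Section \ref{sec:quiver} satisfies \eqref{constraint} by construction for arbitrary $(\lambda_j,\phi_j)$, so the reconstructed quadruple is automatically a point of $\cC_n^{m,P}$. The genuine gap is the claim that the curve \emph{associated to this reconstructed point} is again $C$. By Proposition \ref{prop:1} and \eqref{eq:azpdef1}, the reconstructed denominator is forced to be
\begin{equation*}
q_{\mathrm{new}}(u)=m\sum_{j=1}^{n}\prod_{\substack{\ell=1\\(\ell\neq j)}}^{n}(u-\lambda_\ell^m),
\end{equation*}
i.e.\ proportional to the derivative of $\prod_\ell(u-\lambda_\ell^m)$, hence it depends on which $n$ points were chosen; the reconstructed function $p_{\mathrm{new}}/q_{\mathrm{new}}$ agrees with $p_\delta/q$ only at the $n$ chosen values of $a$, and two rational functions with numerator and denominator of degree $n-1$ (that is, $2n-1$ free coefficients up to scaling) are not pinned down by $n$ interpolation conditions. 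So ``membership on the curve'' makes the interpolation identity automatic for the \emph{original} curve's function, not for the reconstructed one, and the asserted equality of curves requires either restricting which curves count as ``of the above form'' or reading the correspondence as attaching to a point of $\cC_n^{m,P}$ the pair consisting of a curve together with its $n$ marked points. Your write-up, like the paper's bare statement, leaves this step unaddressed, and it is the one step that does not follow from what precedes.
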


\end{document}